\let\accentvec\vec
\let\vec\accentvec
\spnewtheorem{algorithm}{Algorithm}{\bfseries}{\rmfamily}
\let\accentvec\vec
\let\vec\accentvec
\newcommand{\HH}{\mathcal{H}}
\newcommand{\ZZ}{\mathcal{Z}}
\newcommand{\ex}{\textnormal{\textrm{ex}}}
\newcommand{\E}{\textrm{E}}
\def\CS#1{\ensuremath{\textnormal{\textsf{CS}}^{(#1)}}}
\def\LCY#1{\ensuremath{\textnormal{\textsf{LCY}}^{(#1)}}}
\def\Lcyc{\ensuremath{\textnormal{\textsf{LCY}}}}
\def\IIN{\mathbb{N}}
\newcommand{\paren}[1]{{\left({#1}\right)}}
\newcommand{\abs}[1]{{\left\lvert{#1}\right\rvert}}
\def\LL{\ensuremath{\textnormal{\textsf{LL}}}}
\newcommand{\rmTheta}{\mathrm{\Theta}}
\title{Explicit and Efficient Hash Families
Suffice for Cuckoo Hashing with a Stash}
\author{Martin Aum\"{u}ller\inst{1}, Martin
Dietzfelbinger\inst{1}$^,$\thanks{Research supported by
DFG grant DI~412/10-2.}, and Philipp Woelfel\inst{2}$^,$\thanks{Research
supported by a Discovery Grant from the National Sciences and
Research Council of Canada (NSERC).}}
\institute{Fakult{\"a}t f{\"u}r Informatik und Automatisierung, Technische
Universit{\"a}t
Ilmenau,\\ 98694 Ilmenau, Germany\\ \email{martin.aumueller@tu-ilmenau.de},
\email{martin.dietzfelbinger@tu-ilmenau.de}\and Department of Computer Science,
University of Calgary,\\ Calgary, Alberta T2N 1N4, Canada\\
\email{woelfel@cpsc.ucalgary.ca}}
\begin{document}
\maketitle

\begin{abstract}
It is shown that for cuckoo hashing with a stash as proposed by Kirsch,
Mitzenmacher, and Wieder (2008) 
families of very simple hash functions can be used,
maintaining the favorable performance guarantees:
with stash size $s$ the probability of a rehash is $O(1/n^{s+1})$, 
and the evaluation time is $O(s)$. 
Instead of the full randomness
needed for the analysis of Kirsch \emph{et al.} and of Kutzelnigg (2010) 
(resp. $\rmTheta(\log n)$-wise independence for standard cuckoo hashing)
the new approach even works with 2-wise independent hash families as building blocks.  
Both construction and analysis build upon the work of Dietzfelbinger and Woelfel (2003).
The analysis, which can also be applied to the fully random case, 
utilizes a graph counting argument and is much simpler than previous proofs. 
As a byproduct, an algorithm for simulating uniform hashing is obtained. 
While it requires about twice as much space as the most space efficient solutions,  
it is attractive because of its simple and direct structure.

\end{abstract}

\section{Introduction}\label{sec:intro}
Cuckoo hashing as proposed by Pagh and Rodler~\cite{cuckoo_hashing_pagh} 
is a popular implementation of a dictionary with guaranteed constant lookup time. 
To store a set $S$ of $n$ keys from a universe $U$ (i.e., a finite set), cuckoo hashing utilizes two
hash functions, $h_1,h_2:U\to[m]$, where $m=(1+\varepsilon)n$, $\varepsilon>0$.
Each key $x\in S$ is stored in one of two hash tables of size $m$; 
either in the first table at location $h_1(x)$ or in the second one at location $h_2(x)$.
The pair $h_1,h_2$ might not be suitable to accommodate $S$ in these two tables. 
In this case, a \emph{rehash} operation is necessary, which chooses a new pair $h_1,h_2$ and inserts all keys anew.

In their ESA 2008 paper~\cite{stash}, Kirsch, Mitzenmacher, and Wieder deplored the order of magnitude of the probability of a rehash, which is as large as ${\rmTheta}(1/n)$.
They proposed adding a \emph{stash}, 
an additional segment of storage that can hold up to $s$ keys for some (constant) parameter $s$,
and showed that this change
reduces the rehash probability
to ${\rmTheta}(1/n^{s+1})$.
However, the analysis of Kirsch \emph{et al.} requires the hash functions to be
fully random.
In the journal version~\cite{stash:journal:09} 
Kirsch \emph{et al.} posed
 ``proving the above bounds for explicit hash families
that can be represented, sampled, and evaluated efficiently'' as an open problem.

\paragraph{\textnormal{\textbf{Our contribution.}}}\label{subsec:our:contribution}
In this paper we generalize a hash family construction proposed by Dietzfelbinger and Woelfel \cite{DW2003a} 
and show that the resulting hash functions 
have random properties strong enough to preserve the qualities of cuckoo hashing with a stash.
The proof involves a new and simpler analysis of this hashing scheme,
which also works in the fully random case.
The hash functions we propose have a very simple structure:
they combine functions from $O(1)$-wise independent families%
\footnote{$\kappa$-wise independent families of hash functions are defined 
in Section~\ref{sec:basics}.} 
with a few tables of size $n^{1-\rmTheta(1)}$ with random entries from 
$[m]=\{0,\ldots,m-1\}$. 
An attractive version of the construction for stash capacity $s$
has the following performance characteristics:
the description of a hash function pair $(h_1,h_2)$ 
consists of a table with $\sqrt{n}$ entries from $[m]^2$
and $2s+6$ functions from $2$-wise independent classes. 
To evaluate $h_1(x)$ and $h_2(x)$ for $x \in U$, 
we must evaluate these $2s+6$ functions, 
read $2s+4$ table entries, and carry out $4s+8$ additions modulo $m$.
Our main result implies 
for these hash functions and 
for any set $S\subseteq U$ of $n$ keys that with probability $1-O(1/n^{s+1})$
$S$ can be accommodated according to the cuckoo hashing rules.

 In addition, we present a simple data structure for simulating a uniform hash function on $S$ with range $R$, 
using our hash class and essentially a table with
$2(1+\varepsilon)n$ random elements from $R$.

\paragraph{\textnormal{\textbf{Cuckoo hashing with a stash and weak hash functions.}}}\label{subsec:cuckoo:stash}
In~\cite{stash:journal:09,Kutzelnigg10}
it was noticed that for the analysis of cuckoo hashing with a stash of size $s$ 
the properties of the so-called \emph{cuckoo graph} $G(S,h_1,h_2)$ are central.
Assume a set $S$ and hash functions $h_1$ and $h_2$ with range $[m]$ are given. 
The associated cuckoo graph $G(S,h_1,h_2)$ is the bipartite multigraph 
whose two node sets are copies of $[m]$ and whose edge set contains the $n$ pairs
$(h_1(x),h_2(x))$, for $x\in S$.
It is known that a single parameter of $G=G(S,h_1,h_2)$
determines whether a stash of size $s$ 
is sufficient to store $S$ using $(h_1,h_2)$, namely 
the \emph{excess} $\text{ex}(G)$, 
which is defined as the minimum number of edges
one has to remove from $G$ so that all connected components 
of the remaining graph are acyclic or unicyclic.
\begin{lemma}[\cite{stash:journal:09}]\label{lemma:stash:excess}
The keys from $S$ can be stored in the two tables and a stash of size $s$ using $(h_1,h_2)$ if and only
if \emph{$\text{ex}(G(S,h_1,h_2))\le s$}. 
\end{lemma}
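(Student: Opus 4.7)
The plan is to translate the placement problem into an edge-orientation problem on the cuckoo graph $G=G(S,h_1,h_2)$ and then characterize which subgraphs admit the required orientation. Every key $x\in S$ corresponds to the edge $e_x=(h_1(x),h_2(x))$, and assigning $x$ to slot $h_i(x)$ in table $i\in\{1,2\}$ is precisely orienting $e_x$ towards the chosen endpoint. A legal, collision-free placement in the two tables is therefore exactly an orientation of $G$ in which every vertex has in-degree at most $1$. Placing a set $F$ of at most $s$ keys into the stash amounts to deleting the corresponding $\abs{F}\le s$ edges and demanding the orientation property only for $G-F$.

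Next I would establish the following self-contained graph fact: a connected multigraph $C$ (possibly with loops or parallel edges) admits an orientation with all in-degrees at most $1$ if and only if $\abs{E(C)}\le \abs{V(C)}$, i.e., $C$ is acyclic or unicyclic. Necessity is immediate from $\sum_{v} \text{indeg}(v)=\abs{E(C)}$: if every in-degree is at most $1$, then $\abs{E(C)}\le \abs{V(C)}$. Sufficiency is constructive. If $C$ is a tree, root it at any vertex and orient every edge away from the root, so each non-root vertex receives exactly one in-edge. If $C$ contains a unique cycle, orient that cycle consistently (giving each cycle vertex in-degree $1$ from the cycle) and then orient every tree dangling off the cycle outward from its attachment vertex. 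Self-loops and parallel edges cause no trouble: a self-loop is itself the cycle of its component, and a pair of parallel edges forms a length-$2$ cycle handled in the same way.

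Combining the two observations closes both directions of the lemma. For the ``if'' direction, assume $\ex(G)\le s$ and fix a witness set $F$ of at most $s$ edges whose removal leaves every connected component acyclic or unicyclic; stash the corresponding keys and apply the orientation construction to each remaining component, obtaining a legal placement of the other keys. For the ``only if'' direction, any legal storage of $S$ using a stash of size $\le s$ removes some set $F$ with $\abs{F}\le s$ and orients $G-F$ with in-degree at most $1$ at every vertex; the necessity half of the intermediate claim then forces $\abs{E(C)}\le\abs{V(C)}$ for every component $C$ of $G-F$, yielding $\ex(G)\le \abs{F}\le s$.

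The only real obstacle I expect is the sufficiency half of the intermediate graph claim, since one must explicitly produce a valid orientation in the presence of multi-edges, self-loops, and arbitrary trees hanging off a cycle; once the (at most one) cycle in each component has been identified, however, the root-orientation recipe above handles all these cases uniformly, and everything else is bookkeeping.
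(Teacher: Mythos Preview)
Your proposal is correct and follows essentially the same route as the paper's proof in Appendix~\ref{subsec:stash:excess}: both directions rest on the same counting observation (a component admitting a legal placement satisfies $\abs{E(C)}\le\abs{V(C)}$, hence is acyclic or unicyclic) and the same explicit construction (orient the unique cycle consistently, then orient dangling trees outward). Your orientation language makes the correspondence between placements and in-degree-$\le 1$ orientations explicit, but the substance is identical; the only superfluous detail is the treatment of self-loops, which cannot occur since $G(S,h_1,h_2)$ is bipartite.
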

For the convenience of the reader, a proof is given in Appendix~\ref{app:sec:insertions}, 
along with a discussion of insertion procedures, which is omitted in the main text.

Kirsch \emph{et al.} \cite{stash:journal:09} showed that with probability $1-O(1/n^{s+1})$ a random bipartite
graph with $2m=2(1+\varepsilon)n$ nodes and $n$ edges has excess at most $s$.
Their proof uses sophisticated tools such as Poissonization and Markov chain coupling. 
This result generalizes the analysis of standard cuckoo hashing~\cite{cuckoo_hashing_pagh}
with no stash, in which the rehash probability is $\rmTheta(1/n)$. 
Kutzelnigg~\cite{Kutzelnigg10} refined the analysis of~\cite{stash:journal:09}  
in order to determine the constant factor in the asymptotic bound of the rehash probability.
His proof uses generating functions and differential recurrence equations.
Both approaches inherently require 
that the hash functions $h_1$ and $h_2$ used in the algorithm are fully random. 

Recently, P\v{a}tra\c{s}cu and Thorup~\cite{patrascu11charhash} showed that simple tabulation hash functions
are sufficient for running cuckoo hashing, with a rehash probability of $\rmTheta(1/n^{1/3})$, which is tight.
Unfortunately, for these hash functions the rehash probability cannot be improved by using a stash.

Our main contribution is a new analysis that shows
that explicit and efficient hash families are sufficient to obtain the $O(1/n^{s+1})$ bound on the rehash probability. 
We build upon the work of Dietzfelbinger and Woelfel~\cite{DW2003a}.
For standard cuckoo hashing, they proposed hash functions of the form
$h_i(x)=\left(f_i(x) + z^{(i)}[g(x)]\right)\bmod m\text{, for }x\in U$,
for $i \in \{1,2\}$, where $f_i$ and $g$ are from $2k$-wise independent
classes with range $[m]$ and $[\ell]$, resp.,
and $z^{(1)}, z^{(2)}\in [m]^\ell$ are random vectors.
They showed that with such hash functions the rehash probability is $O(1/n + n/\ell^k)$.
Their proof has parts (i) and (ii).
Part (i) already appeared in~\cite{devroye} and~\cite{cuckoo_hashing_pagh}:
The rehash probability is bounded by 
the sum, taken over all minimal excess-$1$ graphs $H$ of different sizes 
and all subsets $T$ of $S$, of the probability that $G(T,h_1,h_2)$ is isomorphic to $H$.
In Sect.~\ref{sec:stash} of this paper we demonstrate that for $h_1$ and $h_2$ fully random
a similar counting approach also works for minimal excess-$(s+1)$ graphs,
whose presence in $G(S,h_1,h_2)$ determines whether a rehash is needed when a stash of size $s$ is used. 
As in~\cite{cuckoo_hashing_pagh}, this analysis also works for $O((s+1) \log n)$-wise independent families.

Part (ii) of the analysis in~\cite{DW2003a} is a little more subtle. 
It shows that for each key set $S$ of size $n$ there is a part $B^{\text{conn}}_S$
of the probability space given by $(h_1,h_2)$ such that $\Pr(B^{\text{conn}}_S)=O(n/\ell^{k})$
and in $\overline{B^{\text{conn}}_S}$ the hash functions act fully randomly on $T\subseteq S$
as long as $G(T,h_1,h_2)$ is connected.
In Sect.~\ref{sec:leafless} we show how this argument 
can be adapted to the situation with a stash, using 
subgraphs without leaves in place of the connected subgraphs.
Woelfel~\cite{asymmetric_balanced_allocation} already demonstrated by applying
functions in \cite{DW2003a} to balanced allocation 
that the approach has more general potential to it.

A comment on the ``full randomness assumption'' and work relating to it seems in order. 
It is often quoted as an empirical observation that weaker hash functions
like $\kappa$-wise independent families
will behave almost like random functions. 
Mitzenmacher and Vadhan~\cite{mitzenmacher_vadhan} showed that if the key set $S$
has a certain kind of entropy then $2$-wise independent hash functions will behave similar to fully
random ones. 
However, as demonstrated in \cite{DS09b}, there are situations 
where cuckoo hashing fails
for a standard $2$-wise independent family and even a random set $S$ (which is ``too dense'' in $U$).
The rather general ``split-and-share'' approach of \cite{DietzfelbingerR09} makes it possible to
justify the full randomness assumption for many situations involving hash functions, 
including cuckoo hashing and many of its variants. 
However, for practical application
this method is less attractive, since space consumption and failure probability are negatively affected
by splitting the key set into ``chunks'' and treating these separately.

\paragraph{\textnormal{\textbf{Simulating Uniform Hashing.}}}\label{subsec:sim:uniform}
Consider a universe $U$ of keys and a finite set $R$.  
By the term ``\emph{simulating uniform hashing for $U$ and $R$}'' 
we mean an algorithm that does the following. 
On input $n\in\IIN$, a randomized procedure sets up a data structure DS$_n$ 
that represents a hash function $h\colon U\to R$, 
which can then be evaluated efficiently for keys in $U$.
For each set $S\subseteq U$ of cardinality $n$ there is an event $B_S$
with the property that conditioned on $\overline{B_S}$ the values $h(x)$, $x\in S$, are fully random. 
The quality of the algorithm is determined by the space needed for
DS$_n$, the evaluation time for $h$, and the probability of the event $B_S$. 
It should be possible to evaluate $h$ in constant time.
The amount of entropy required for such an algorithm implies that at least 
$n\log|R|$ bits are needed to represent DS$_n$.

Pagh and Pagh~\cite{pagh_uniform} proposed a construction with $O(n)$ random words from $R$,
based on Siegel's functions~\cite{siegel04},
which have constant, but huge evaluation time. They also gave a general method to 
reduce the space to $(1+\varepsilon)n$, at the cost of an evaluation time
of $O(1/\varepsilon^2)$. 
In~\cite{DW2003a} a linear-space construction 
with tables of size $O(n)$ was given that contain (descriptions of) $O(1)$-wise independent hash functions.
The construction with the currently asymptotically best performance parameters,
$(1+\varepsilon)n$ words from $R$ and evaluation time $O(\log(1/\varepsilon))$,
as given in~\cite{DietzfelbingerR09}, is based on results of
Calkin~\cite{Calkin97} and
the ``split-and-share'' approach,
involving the same disadvantages as mentioned above. 

Our construction, to be described in Sect.~\ref{sec:uniform_hashing}, essentially results from the construction
in~\cite{pagh_uniform}
by replacing Siegel's functions with functions from our new class.
The data structure consists of a hash function pair $(h_1,h_2)$ 
from our hash class, a $O(1)$-wise independent hash function with range $R$, 
$O(s)$ small tables with entries from $R$,
and two tables of size $m=(1+\varepsilon)n$ each, filled with random elements from $R$.
The evaluation time of $h$ is $O(s)$, 
and for $S\subseteq U$, $|S|=n$, the event $B_S$ occurs with probability $O(1/n^{s+1})$.
The construction requires roughly twice as much space 
as the most space-efficient solutions~\cite{DietzfelbingerR09,pagh_uniform}. 
However, it seems to be a good compromise combining simplicity with moderate space consumption.

\section{Basics}\label{sec:basics}

Let $U$ (the ``universe'') be a finite set. A mapping from $U$ to $[r]$ is a \emph{hash function with range} $[r]$.
For an integer $\kappa\ge2$, a set $\HH$ of hash functions with range $[r]$ is called a 
$\kappa$\emph{-wise independent} hash family if for arbitrary distinct keys
$x_1,\ldots,x_\kappa\in U$ and for arbitrary $j_1,\ldots,j_\kappa\in [r]$ we have
$\Pr\nolimits_{h\in\HH}\bigl(h(x_1)=j_1 \wedge \ldots \wedge h(x_\kappa)=j_\kappa\bigr) =
{1}/{r^\kappa}$.
The classical $\kappa$-wise independent hash family construction is based on polynomials of degree $\kappa-1$
over a finite field~\cite{WegmanC79}.
More efficient hash function evaluation can be achieved with tabulation-based constructions \cite{DW2003a,ThorupZ04,ThorupZ10,KW2012a}.
Throughout this paper, $\HH^\kappa_r$ 
denotes an arbitrary $\kappa$-wise independent hash
family with domain $U$ and range $[r]$.

We combine $\kappa$-wise independent classes with lookups in
tables of size $\ell$ in order to obtain pairs of hash functions from $U$ to $[m]$:

\begin{definition}\label{def:family:Z}Let $c\ge1$ and $\kappa\ge2$. 
For integers $m$, $\ell\ge 1$, and given
$f_1,f_2\in  \HH^\kappa_m$, $g_1,\ldots,g_c \in \HH^\kappa_\ell$, 
and vectors $z^{(i)}_j\in[m]^\ell$, $1\le j \le c$, for $i\in\{1,2\}$,
let $(h_1,h_2) = (h_1,h_2)\langle f_1,f_2,g_1,\ldots,g_c,z_1^{(1)},\ldots,z_c^{(1)},z_1^{(2)},\ldots,z_c^{(2)}\rangle$, where
$$
{\textstyle h_i(x) = \left(f_i(x) +\sum_{1\le j \le c} z_j^{(i)}[g_j(x)]\right) \bmod m\text{, for }x\in U, i\in\{1,2\}.}
$$
Let $\ZZ^{\kappa,c}_{\ell,m}$ be the family of all these pairs $(h_1,h_2)$ of hash functions.
\end{definition}
While this is not reflected in the notation, we consider $(h_1,h_2)$ as a
structure from which 
the components $g_1,\ldots,g_c$ and $f_i,z^{(i)}_1,\ldots,z^{(i)}_c$, $i\in\{1,2\}$, can be read off again. 
It is family $\ZZ = \ZZ^{2k,c}_{\ell,m}$, for some $k \geq 1$, made into a probability space by the uniform distribution, that we
will study in the following.
We usually assume that $c$ and $k$ are fixed and that $m$ and $\ell$ are known.  

\subsection{Basic Facts}\label{subsec:basic:facts}
We start with some basic observations concerning 
the effects of compression properties in the ``$g$-part'' of $(h_1,h_2)$, extending similar statements in~\cite{DW2003a}.
\begin{definition}\label{def:T:bad}
For $T \subseteq U$, define the random variable 
$d_T$, the ``deficiency'' of $(h_1,h_2)$ with respect to $T$, by 
$d_T((h_1,h_2))=|T| - \max\{ k, |g_1(T)|,\ldots, |g_c(T)|\}$.
\emph{(}Note\emph{:} $d_T$ depends only on the $g_j$-components of $(h_1,h_2)$.\emph{)} 
Further, define\\
\makebox[2em][r]{\textnormal{(i)}} \mbox{\emph{$\text{bad}_T$} as the event that $d_T > k$\emph{;}}\\
\makebox[2em][r]{\textnormal{(ii)}} \emph{$\text{good}_T$}{} as \emph{$\overline{\text{bad}_T}$}, 
	           i.\,e., the event that $d_T \le k$\emph{;}\\
\makebox[2em][r]{\textnormal{(iii)}} \emph{$\text{crit}_T$} as the event that $d_T = k$.\\
Hash function pairs $(h_1,h_2)$ in these events are called 
\emph{``$T$-bad''}, \emph{``$T$-good''}, and \emph{``$T$-critical''}, resp.
\end{definition}
\begin{lemma}\label{lemma:random}
  Assume $k\ge1$ and $c\ge 1$. 
  For $T\subseteq U$, the following holds\emph{:}\\
\textnormal{(a)} \emph{$\Pr(\text{bad}_T\cup\text{crit}_T) \le \bigl(\abs{T}^{2k}/\ell^k\bigr)^c$}.\\
\textnormal{(b)} Conditioned on \emph{$\text{good}_T$} \emph{(}or on \emph{$\text{crit}_T$}\emph{)},
  the pairs $(h_1(x),h_2(x))$, $x\in T$, are \linebreak 
  \phantom{\textnormal{(b)}}distributed uniformly and independently in $[r]^2$.
\end{lemma}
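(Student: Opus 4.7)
The plan is to handle parts (a) and (b) in turn, with (a) being purely a counting argument on the $g_j$-components and (b) using the randomness in the $f_i$'s and $z_j^{(i)}$'s that is independent of the $g_j$'s.

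For (a): the event $\text{bad}_T\cup\text{crit}_T$ equals $\{d_T\ge k\}$, which unfolds to $\{|T|\ge 2k\}\cap\bigcap_{j=1}^c\{|g_j(T)|\le|T|-k\}$. Since $g_1,\ldots,g_c$ are drawn independently and uniformly from $\HH^{2k}_\ell$, the probability factorizes, so it suffices to prove $q:=\Pr(|g(T)|\le|T|-k)\le|T|^{2k}/\ell^k$ for a single $g\in\HH^{2k}_\ell$. I will count ordered $k$-tuples of collision edges. Call a sequence $\sigma=((a_1,b_1),\ldots,(a_k,b_k))$ a \emph{witness} if each $a_i,b_i\in T$, $a_i\neq b_i$, $g(a_i)=g(b_i)$, and $b_i\notin\{a_1,b_1,\ldots,a_{i-1},b_{i-1}\}$, and let $W$ be the number of witnesses. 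The new-endpoint condition on $b_i$ forces the multigraph with edge set $\{\{a_i,b_i\}\}_{i=1}^k$ on its vertex set $V\subseteq T$ to be a forest, so $|V|\le 2k$ and the constraint system $g(a_i)=g(b_i)$ has rank $k$. By $|V|$-wise independence of $g$ (which holds since $|V|\le 2k$), any fixed candidate sequence satisfies all constraints with probability $1/\ell^k$, and there are at most $|T|^{2k}$ candidates, giving $\mathbb E[W]\le|T|^{2k}/\ell^k$. Conversely, on the event $\{|g(T)|\le|T|-k\}$ a spanning forest of the collision multigraph (star in each class) has at least $k$ edges, and every ordering of any $k$ of them yields a distinct witness; hence $W\ge k!$, and Markov's inequality gives $q\le\mathbb E[W]/k!\le|T|^{2k}/\ell^k$.

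For (b): fix $g_1,\ldots,g_c$ in $\text{good}_T$ (the argument applies verbatim when $\text{crit}_T$ is imposed). The $f_i$ and $z_j^{(i)}$ remain independent of this conditioning. Because $(f_1,z^{(1)}_1,\ldots,z^{(1)}_c)$ and $(f_2,z^{(2)}_1,\ldots,z^{(2)}_c)$ are independent, it is enough to show that $(h_i(x))_{x\in T}$ is uniform on $[m]^{|T|}$ for each $i\in\{1,2\}$ separately. If $|T|\le 2k$, the $2k$-wise independent $f_i$ already makes $(f_i(x))_{x\in T}$ uniform, and adding the $z$-lookups preserves uniformity. Otherwise pick $j^*$ with $|g_{j^*}(T)|\ge|T|-k$, and split $T=T^{\mathrm{sgl}}\sqcup T^*$, where $T^*$ consists of the elements whose $g_{j^*}$-class in $T$ has size $\ge 2$. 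Summing $n_v$ over non-singleton classes and using $n_v\le 2(n_v-1)$ gives $|T^*|\le 2(|T|-|g_{j^*}(T)|)\le 2k$. For $x\in T^{\mathrm{sgl}}$ the entry $z_{j^*}^{(i)}[g_{j^*}(x)]$ appears in no other $h_i(y)$ with $y\in T$, so it injects an independent uniform element into $h_i(x)$. For $x\in T^*$, integrating out first the $z_{j^*}^{(i)}[v]$ for $v\in g_{j^*}(T^*)$ contributes $(1/m)^{|g_{j^*}(T^*)|}$, and the remaining linear constraints on $(f_i(x))_{x\in T^*}$ (uniform by $|T^*|$-wise independence) contribute $(1/m)^{|T^*|-|g_{j^*}(T^*)|}$, for a total of $(1/m)^{|T^*|}$. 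Combining singletons and non-singletons yields the claimed uniformity.

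The main obstacle is the witness definition in (a): it must be restrictive enough that the constraint graph is always a forest (otherwise the rank drops below $k$ and the per-witness probability exceeds $1/\ell^k$), yet permissive enough that every spanning forest of the true collision graph produces many valid witnesses. The condition "$b_i$ is new" achieves both: it prevents cycles by attaching a fresh leaf at each step, while allowing any $a_i\in T\setminus\{b_i\}$, so that the $n_v-1$ star edges inside a single class can all appear in a single witness. In (b) the matching subtlety is that $\text{good}_T$ does not give injectivity of any $g_j$ when $|T|>2k$; isolating the non-singleton part $T^*$ of size $\le 2k$ is what makes the $2k$-wise independence of $f_i$ sufficient and the probability on $T^*$ factor cleanly.
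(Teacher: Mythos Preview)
Your proof is correct. Part (b) is essentially the paper's argument: isolate a coordinate $j^*$ on which $g_{j^*}$ has at most $k$ collisions, let $T^*$ be the colliding keys (so $|T^*|\le 2k$), use the $2k$-wise independence of $f_i$ to randomize $h_i$ on $T^*$, and use the fresh table entries $z_{j^*}^{(i)}[g_{j^*}(x)]$ for the singletons. The paper presents this in slightly different order (it fixes $z_j^{(i)}$ for $j\neq j^*$ first and then argues directly rather than ``integrating out'' the $z$-entries on $g_{j^*}(T^*)$), but the content is the same.

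Part (a) is a genuinely different route. The paper argues that $|T|-|g(T)|\ge k$ forces some $2k$-element subset $T'\subseteq T$ with $|g(T')|\le k$, then union-bounds over all $\binom{|T|}{2k}$ such subsets and over all partitions of $T'$ into at most $k$ classes (at most $(2k)!$ of them), each partition being realized with probability at most $\ell^{-k}$. You instead run a first-moment argument on a witness count: ordered $k$-tuples of collision pairs with the ``$b_i$ is new'' restriction. Your restriction is exactly what forces the edge set to be a forest, so the $k$ equality constraints are independent and each candidate contributes $\ell^{-k}$; and choosing stars as the spanning forest of the collision classes guarantees that any ordering of $k$ of its edges can be oriented leaf-last, giving $W\ge k!$ on the event. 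Your argument even gains a factor $1/k!$ over the stated bound, and it avoids the separate ``reduce to a $2k$-subset'' step; the paper's version, on the other hand, produces an explicit $2k$-subset, which it later reuses (in the proof of Lemma~\ref{lem:02}) to build the protected set $T^*$.
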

\begin{proof}
(a) Assume $\abs{T}\ge 2k$ (otherwise the events $\text{bad}_T$ and $\text{crit}_T$ cannot occur).
Since $g_1,\dots,g_c$ are
independent, it suffices to show that for a function $g$ chosen randomly from $\HH^{2k}_\ell$ we have
$\Pr( |T| - |g(T)| \geq k ) \le \abs{T}^{2k}/\ell^k$.

We first argue that if $|T| - |g(T)| \ge k$ then there is
a subset $T'$ of $T$
with
$\abs{T'}=2k$
    and
$\abs{g(T')}\leq k$.
Initialize $T'$ as $T$. Repeat the following as long as $|T'| > 2k$: (i) if
there exists a key $x \in T'$ such that $g(x) \neq g(y)$ for all $y \in T' \setminus \{x\}$,
remove $x$ from $T'$; (ii) otherwise, remove any key. 
Clearly, this process terminates with $\abs{T'}=2k$. 
It also maintains the invariant $\abs{T'}-\abs{g(T')}\geq k$: 
In case (i) $\abs{T'} - \abs{g(T')}$ remains unchanged.
In case (ii) before the key is removed from $T'$ we have
$|g(T')| \leq |T'|/2$ and thus $|T'| - |g(T')| \geq |T'|/2 > k$.

Now fix a subset 
$T'$ of $T$ of size $2k$ that satisfies $\abs{g(T')}\leq k$.
The preimages $g^{-1}(u)$, $u\in g(T')$, partition $T'$ into
$k'$ classes, $k'\le k$, such that $g$ is constant on each class.  
Since $g$ is chosen from a $2k$-wise independent class, 
the probability that $g$ is constant on all classes of a given partition 
of $T'$ into classes $C_1,\ldots,C_{k'}$, with $k'\le k$, is exactly $\ell^{-(2k-k')} \le \ell^{-k}$.

Finally, we bound 
$\Pr(|g(T)|\le |T|-k)$.
There are $\binom{|T|}{2k}$ subsets $T'$ of $T$ of size $2k$.
Every partition of such a set $T'$ into $k'\leq k$ classes can be represented by a permutation of $T'$ with $k'$ cycles, 
where each cycle contains the elements from one class.
Hence, there are at most $(2k)!$ such partitions.
This yields:
\begin{equation}
\Pr(|T|-|g(T)|\geq k) \le \binom{|T|}{2k}\cdot (2k)!
\cdot\frac{1}{\ell^k} \le \frac{|T|^{2k}}{\ell^k}. 
\label{eq:600}
\end{equation}

\noindent(b) If $|T| \leq 2k$, then $h_1$ and $h_2$ are fully random on $T$ 
simply because $f_1$ and $f_2$ are $2k$-wise independent.
So suppose $|T| > 2k$. 
Fix an arbitrary $g$-part of $(h_1,h_2)$ so that $\text{good}_T$ occurs, 
i.e., $\max\{k, |g_1(T)|,\ldots, |g_c(T)|\} \geq \abs{T} - k$. 
Let $j_0 \in \{1,\dots,c\}$ be such that $|g_{j_0}(T)| \geq |T| - k$. 
Arbitrarily fix all values in the tables $z_{j}^{(i)}$ with $j\neq j_0$
and $i \in \{1,2\}$. 
Let $T^\ast$ be
the set of keys in $T$ colliding with other keys in $T$ under $g_{j_0}$. 
Then $|T^\ast| \leq 2k$.
Choose the values $z_{j_0}^{(i)}[g_{j_0}(x)]$ for all $x \in T^\ast$ and $i\in \{1,2\}$ at random.
Furthermore, choose $f_1$ and $f_2$ at random from the $2k$-wise independent
family $\HH^{2k}_r$. 
This determines $h_1(x)$ and $h_2(x)$, $x \in T^\ast$, as fully random values.
Furthermore, the function $g_{j_0}$ maps the keys $x \in T - T^\ast$ to
distinct entries of the vectors $z_{j_0}^{(i)}$ that were not fixed before.
Thus, the hash function values $h_1(x), h_2(x)$, $x \in T - T^\ast$, are 
distributed fully randomly as well and are independent of those with $x\in T^\ast$.
\qed \end{proof}
%

\section{Graph Properties and Basic Setup}\label{sec:graph:properties}

For $m \in \IIN$ let $\mathcal{G}_m$ denote the set of all
bipartite (multi-)graphs with vertex set $[m]$ on each side of the bipartition. 
A set $\mathcal{A} \subseteq \mathcal{G}_m$ is called a \emph{graph property}.
For example, $\mathcal{A}$ could be the set of graphs in $\mathcal{G}_m$ that have excess larger than $s$.
For a graph property $\mathcal{A} \subseteq \mathcal{G}_m$ and $T\subseteq U$, let
$\mathcal{A}_T$ denote the event
that $G(T,h_1,h_2)$ has property $\mathcal{A}$ (i.\,e., that $G(T,h_1,h_2)\in\mathcal{A}$).
In the following, our main objective is to bound the probability $\Pr(\exists T \subseteq S\colon \mathcal{A}_T)$
for graph properties $\mathcal{A}$ which are important for our analysis.

For the next lemma we need the following definitions. For $S \subseteq U$ and a graph property $\mathcal{A}$ let $B^{\mathcal{A}}_{S}
\subseteq \ZZ$  be the event 
 $\exists T \subseteq S \colon \mathcal{A}_T \cap \text{bad}_T$ (see Def.~\ref{def:T:bad}). Considering fully random hash functions $(h_1^*,h_2^*)$ for a moment, let
 $p^{\mathcal{A}}_T=\Pr(G(T,h_1^*,h_2^*)\in\mathcal{A})$.
\begin{lemma}\label{lem:good:bad} For  an arbitrary graph property $\mathcal{A}$
we have
\begin{equation}
\Pr(\exists T \subseteq S\colon \mathcal{A}_T) \le \Pr(B^{\mathcal{A}}_{S}) + \sum_{T\subseteq S}p^{\mathcal{A}}_T. 
\label{eq:1000}
\end{equation}
\end{lemma}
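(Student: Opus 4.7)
The plan is to split the event $\{\exists T \subseteq S\colon \mathcal{A}_T\}$ according to whether some witnessing $T$ simultaneously makes the hash pair $T$-bad, or whether every witness satisfies $\text{good}_T$. The complement of $B^{\mathcal{A}}_S$ in $\ZZ$ is precisely the event that for every $T \subseteq S$ with $\mathcal{A}_T$, the pair $(h_1,h_2)$ is $T$-good. Hence we have the inclusion
$$\{\exists T \subseteq S\colon \mathcal{A}_T\} \;\subseteq\; B^{\mathcal{A}}_S \;\cup\; \bigcup_{T \subseteq S} \bigl(\mathcal{A}_T \cap \text{good}_T\bigr),$$
and a single union bound yields
$$\Pr(\exists T \subseteq S\colon \mathcal{A}_T) \;\le\; \Pr(B^{\mathcal{A}}_S) + \sum_{T \subseteq S} \Pr(\mathcal{A}_T \cap \text{good}_T).$$

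The remaining task is to bound each term $\Pr(\mathcal{A}_T \cap \text{good}_T)$ by $p^{\mathcal{A}}_T$. This is exactly where Lemma~\ref{lemma:random}(b) comes in: conditioned on $\text{good}_T$, the family of pairs $(h_1(x),h_2(x))$ for $x\in T$ is distributed uniformly and independently in $[m]^2$, i.e., has the same joint distribution as under fully random $(h_1^*,h_2^*)$. Since the cuckoo (multi-)graph $G(T,h_1,h_2)$ is a deterministic function of exactly those pairs, and the property $\mathcal{A}$ depends only on the graph, we get $\Pr(\mathcal{A}_T \mid \text{good}_T) = p^{\mathcal{A}}_T$, and therefore $\Pr(\mathcal{A}_T \cap \text{good}_T) \le p^{\mathcal{A}}_T$.

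Plugging this into the union bound completes the argument. There is no real obstacle here — the only subtlety is being careful that the graph $G(T,h_1,h_2)$ really is a function of just the pairs $(h_1(x),h_2(x))_{x\in T}$ (and the fixed labeling of the vertex set $[m]$), so that the conditional full-randomness statement of Lemma~\ref{lemma:random}(b) transfers directly to a statement about graph properties. The main conceptual point of the lemma, to be used in subsequent sections, is that the analysis has been cleanly reduced to two independent tasks: bounding $\Pr(B^{\mathcal{A}}_S)$ (a purely combinatorial question about the ``$g$-part'' of the hash family) and bounding $\sum_T p^{\mathcal{A}}_T$ (the analysis of $\mathcal{A}$ under the fully random model).
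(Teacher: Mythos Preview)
Your proof is correct and follows essentially the same route as the paper: split off $B^{\mathcal{A}}_S$, observe that on its complement every witness $T$ with $\mathcal{A}_T$ must satisfy $\text{good}_T$, and then invoke Lemma~\ref{lemma:random}(b) to identify $\Pr(\mathcal{A}_T\mid\text{good}_T)$ with $p^{\mathcal{A}}_T$. The only cosmetic difference is that you state the inclusion $\{\exists T\colon \mathcal{A}_T\}\subseteq B^{\mathcal{A}}_S\cup\bigcup_T(\mathcal{A}_T\cap\text{good}_T)$ directly, whereas the paper first writes $\Pr(\cdot)\le\Pr(B^{\mathcal{A}}_S)+\Pr((\exists T\colon\mathcal{A}_T)\cap\overline{B^{\mathcal{A}}_S})$ and then bounds the second term termwise.
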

\begin{proof}\quad 
 $\Pr(\exists T \subseteq S\colon \mathcal{A}_T)
  \leq \Pr(B^{\mathcal{A}}_{S}) + \Pr((\exists T \subseteq S: \mathcal{A}_T)
\cap \overline{B^{\mathcal{A}}_{S}})$, and
\begin{align*}
\sum_{T \subseteq S}\Pr( \mathcal{A}_T
\cap \overline{B^{\mathcal{A}}_{S}}) \underset{\text{(i)}}{\leq} \sum_{T \subseteq
S}\Pr( \mathcal{A}_T \cap \text{good}_T ) \leq \sum_{T \subseteq S}\Pr(
\mathcal{A}_T \mid \text{good}_T) \underset{\text{(ii)}}{=} \sum_{T \subseteq
S}p^{\mathcal{A}}_T,
\end{align*}
where (i) holds by the definition of $B^{\mathcal{A}}_{S}$,
and (ii) holds by Lemma~\ref{lemma:random}(b).
\qed
\end{proof}
This lemma encapsulates our overall strategy for bounding $\Pr(\exists
T\subseteq S\colon \mathcal{A}_T)$. The second summand in (\ref{eq:1000}) can be bounded assuming full randomness.
The task of bounding the first summand is tackled separately, in Section~\ref{sec:leafless}.

\section{A Bound for Leaf\/less Graphs}\label{sec:leafless}

The following observation, which is immediate from the definitions,
will be helpful in applying Lemma~\ref{lem:good:bad}.
\begin{lemma}\label{lem:subproperty}
Let $m \in \mathbb{N}$, and let $\mathcal{A} \subseteq \mathcal{A'} \subseteq
\mathcal{G}_m$. Then $\Pr(B^{\mathcal{A}}_{S}) \leq 
\Pr(B^{\mathcal{A}'}_{S})$.\qed 
\end{lemma}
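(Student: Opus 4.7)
The statement follows by pure set-theoretic monotonicity, so my plan is just to unfold the definitions and track the inclusions. Recall that $B^{\mathcal{A}}_{S}$ is defined as the event $\exists T \subseteq S\colon \mathcal{A}_T \cap \text{bad}_T$, which, as a subset of the probability space $\ZZ$, is simply the union $\bigcup_{T \subseteq S} (\mathcal{A}_T \cap \text{bad}_T)$. So I would expand both $B^{\mathcal{A}}_S$ and $B^{\mathcal{A}'}_S$ this way and compare them term by term over the (finite) index set of subsets $T \subseteq S$.

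The single observation that drives the proof is that if $\mathcal{A} \subseteq \mathcal{A}'$ then for every fixed $T \subseteq U$ the event $\mathcal{A}_T = \{(h_1,h_2) : G(T,h_1,h_2) \in \mathcal{A}\}$ is a subset of $\mathcal{A}'_T = \{(h_1,h_2) : G(T,h_1,h_2) \in \mathcal{A}'\}$, since membership of $G(T,h_1,h_2)$ in the smaller property implies membership in the larger one. Intersecting with $\text{bad}_T$ (which depends only on the $g$-components and is not affected by this containment) preserves the inclusion, giving $\mathcal{A}_T \cap \text{bad}_T \subseteq \mathcal{A}'_T \cap \text{bad}_T$ for every $T$.

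Taking the union over $T \subseteq S$ on both sides then yields $B^{\mathcal{A}}_{S} \subseteq B^{\mathcal{A}'}_{S}$, and monotonicity of the probability measure delivers $\Pr(B^{\mathcal{A}}_{S}) \leq \Pr(B^{\mathcal{A}'}_{S})$. There is really no obstacle here; the lemma is flagged in the text as ``immediate from the definitions,'' and its role is purely to let later arguments replace a complicated graph property by a simpler superproperty when bounding the ``bad'' contribution in Lemma~\ref{lem:good:bad}.
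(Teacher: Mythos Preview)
Your argument is correct and matches the paper's intent: the lemma is stated with an inline \qed{} and described as ``immediate from the definitions,'' so the paper gives no proof beyond that. Your unfolding of $B^{\mathcal{A}}_S$ as $\bigcup_{T\subseteq S}(\mathcal{A}_T\cap\text{bad}_T)$ and the termwise inclusion $\mathcal{A}_T\subseteq\mathcal{A}'_T$ is exactly the intended one-line justification.
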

We define a graph property to be used in the role
of $\mathcal{A}'$ in applications of Lemma~\ref{lem:subproperty}.  
A node with degree 1 in a graph is called a \emph{leaf};
an edge incident with a leaf is called a \emph{leaf edge}.
An edge is called a \emph{cycle edge} if removing it does not disconnect any two nodes. 
  A graph is called \emph{leafless} if it has no leaves. 
 Let $\LL \subseteq \mathcal{G}_m$ be the set of all leaf\/less graphs. The  
\emph{$2$-core} of a graph is its (unique) maximum leaf\/less subgraph.
The purpose of the present section is to prove a bound on $\Pr(B^{\LL}_{S})$.
\begin{lemma}\label{lem:leafless}
Let $\varepsilon>0$, let $S\subseteq U$ with $|S|=n$, and let $m=(1+\varepsilon)n$.
Assume $(h_1,h_2)$ is chosen at random from $\ZZ=\ZZ^{2k,c}_{\ell,m}$. 
Then $\Pr(B^{\LL}_{S}) = O\paren{{n}/{\ell^{ck}}}$. 
\end{lemma}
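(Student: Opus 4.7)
The plan is to apply a union bound over $T \subseteq S$,
\[
  \Pr[B^{\LL}_S] \;\le\; \sum_{T \subseteq S} \Pr[\LL_T \cap \text{bad}_T],
\]
and to control each term by combining two structural facts: a counting bound on leafless bipartite multigraphs, and the combinatorial witnesses for $\text{bad}_T$ already used in the proof of Lemma~\ref{lemma:random}(a). The key feature of $\LL_T$ is that the $2t$ endpoint values $(h_1(x), h_2(x))_{x \in T}$ take at most $t = |T|$ distinct values in $[m]$, since every vertex of a leafless graph has degree at least two; while $\text{bad}_T$ forces each of the $c$ functions $g_j$ to compress $T$ by more than $k$ elements.

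For each $T$ I would condition on the $g$-parts of $(h_1, h_2)$. The combinatorial witnesses from Lemma~\ref{lemma:random}(a) (for each $j$, a sub-subset $T'_j \subseteq T$ of size $2k$, a partition of $T'_j$ into at most $k$ classes, and an image in $[\ell]$) contribute, by a union bound over all witness data, a factor of order $(t^{2k}/\ell^k)^c = t^{2kc}/\ell^{ck}$ to $\Pr[\text{bad}_T]$. Given the $g$-parts, the remaining randomness in $f_1, f_2$ and the table entries $z_j^{(i)}[\cdot]$ leaves the pairs $(h_1(x), h_2(x))_{x \in T}$ still $2k$-wise independent (the table-lookup terms act as deterministic shifts of the $2k$-wise independent $f_i$). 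The leafless structures then have to be enumerated: using an ear decomposition (start from a cycle, repeatedly attach a path whose endpoints are existing vertices), one bounds the count of labeled leafless bipartite multigraphs on $t$ edges and $v \le t$ incident vertices, and consequently the probability that $G(T, h_1, h_2)$ equals any such graph by roughly $m^{v - 2t}$ times a combinatorial factor. Combined with $\binom{n}{t}$ choices of $T$ and $n / m = 1/(1 + \varepsilon) < 1$, the sum is a convergent geometric series in $t$; the leading contribution (from minimal witnesses with $t \ge 2k + 1$) is of order $n$ per family of witnesses, and multiplication by $1/\ell^{ck}$ yields the desired $O(n / \ell^{ck})$.

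The main obstacle is the conditional dependence of $(h_1, h_2)$ on the $g$-parts once $\text{bad}_T$ is enforced: when a $g_j$ collapses many keys of $T$ to the same table entry, repeated lookups correlate their $h_i$-values, so $(h_1, h_2)$ is not jointly uniform on all of $T$, and the straightforward counting argument does not directly apply. I would work around this along the lines of the proof of Lemma~\ref{lemma:random}(b): pick $j_0$ achieving $|g_{j_0}(T)| = \max_j |g_j(T)|$, identify the ``non-colliding'' subset of keys that are alone in their $g_{j_0}$-class, and argue that on this subset the pairs $(h_1, h_2)$ are fully random once all other table entries are fixed. The leafless-graph counting can then be performed on the non-colliding part; the residual colliding keys, whose size is controlled by the witness data, absorb into the witness factor and contribute only a constant (in $c, k$) blow-up.
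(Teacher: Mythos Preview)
Your direct union bound $\Pr[B^{\LL}_S]\le\sum_{T\subseteq S}\Pr[\LL_T\cap\text{bad}_T]$ runs into two genuine obstacles that your workaround does not resolve.

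\textbf{Randomness under $\text{bad}_T$.} Lemma~\ref{lemma:random}(b) gives full randomness on $T$ only under $\text{good}_T$ (or $\text{crit}_T$); under $\text{bad}_T$ it says nothing. Your plan is to split $T$ into non-colliding keys (fully random) and colliding keys (absorbed into the witness factor). But the witness data from Lemma~\ref{lemma:random}(a) only pins down $2k$ keys per $g_j$, i.e.\ at most $2ck$ keys total, whereas under $\text{bad}_T$ the actual colliding set can be arbitrarily large (up to all of $|T|$). Moreover, once you delete the colliding edges, the remaining graph on the non-colliding keys need not be leafless, so you lose the structural constraint you were going to count against.

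\textbf{Counting leafless graphs.} Even granting full randomness, your ear-decomposition count of leafless shapes is $t^{O(\gamma+\zeta)}$ (this is exactly Lemma~\ref{lem:num_graphs} with $\ell=0$), and for leafless graphs $\gamma$ and $\zeta$ can each be $\Theta(t)$: think of $t/2$ disjoint double edges. The resulting sum over shapes does not converge. The factor $\Pr[\text{bad}_T]\le (t^{2k}/\ell^k)^c$ does not help for large $t$ either, since it exceeds $1$ once $t\gtrsim\sqrt{\ell}$.

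The paper takes a different route that avoids both problems simultaneously. Rather than analyze $\LL_T\cap\text{bad}_T$ directly, it proves a structural peeling lemma: whenever $G(T,h_1,h_2)\in\LL$ and $(h_1,h_2)$ is $T$-bad, one can peel down to some $T'\subseteq T$ for which $(h_1,h_2)$ is $T'$-\emph{critical} (so Lemma~\ref{lemma:random}(b) applies and the hash values on $T'$ are fully random) and $G(T',h_1,h_2)$ lies in a much smaller class $\LCY{4ck}$ with $\gamma,\zeta$ and the number of leaf/cycle edges in the single leaf component all bounded by $4ck=O(1)$. The peeling is the delicate part: it removes edges one at a time to drop from bad to exactly critical, then marks a set $T^\ast$ of at most $2ck$ ``protected'' keys that witness criticality, and finally removes unmarked components, unmarked leaf/cycle edges, and trims cyclomatic numbers, all while keeping $T^\ast$ intact. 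After this reduction the count of shapes is $t^{O(1)}$, the graph has at most $t+1$ vertices, and the standard calculation $\sum_t\binom{n}{t}\cdot\frac{t!\,t^{O(1)}}{m^{t-1}}\cdot(t^2/\ell)^{ck}=O(n/\ell^{ck})$ goes through. The passage from bad to critical is the key idea you are missing.
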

We recall a standard notion from graph theory (already used in~\cite{DW2003a}; \emph{cf}. App.~\ref{subsec:excess}):
The \emph{cyclomatic number} $\gamma(G)$ of a graph $G$ is the smallest number of edges 
one has to remove from $G$ to obtain a graph with no cycles.
Also, let $\zeta(G)$ denote the number of connected components of $G$ 
(ignoring isolated points). 
\begin{lemma}\label{lem:num_graphs}
Let $N(t,\ell, \gamma, \zeta)$ be the number of non-isomorphic 
\emph{(}multi-\emph{)}graphs with $\zeta$ connected components and
cyclomatic number $\gamma$ that have $t$ edges, $\ell$ of which are leaf
edges.
Then $N(t,\ell,\gamma,\zeta) = t^{O(\ell + \gamma + \zeta)}$.
\end{lemma}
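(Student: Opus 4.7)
My plan is to bound $N(t,\ell,\gamma,\zeta)$ by decomposing every such multigraph $G$ into three parts: its $2$-core $C$, a set of pendant trees each attached to a single vertex of $C$ by a single ``bridge'' edge, and a set of isolated tree components. The first step is to verify that this decomposition is well-defined: iteratively peeling degree-$1$ vertices leaves the $2$-core and preserves the cyclomatic number, and no peeled tree can attach to $C$ via two bridge edges (that would close a cycle through a degree-$1$ vertex, a contradiction). Note also that every non-tree component contributes at least $1$ to $\gamma$, and every isolated tree contributes at least $1$ to $\ell$, so $\zeta \le \gamma + \ell$; hence it is enough to prove $N(t,\ell,\gamma,\zeta) \le t^{O(\gamma + \ell)}$.

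For the $2$-core, I would contract every maximal path of degree-$2$ vertices to a single ``super-edge,'' obtaining a reduced multigraph $C^-$ in which every vertex has degree $\ge 3$. A standard degree-sum argument then shows that $C^-$ has $O(\gamma)$ vertices and $O(\gamma)$ edges, so the number of non-isomorphic choices for $C^-$ is at most $\gamma^{O(\gamma)} \le t^{O(\gamma)}$ (crudely bounded by listing edges as unordered pairs of vertices). Reconstructing $C$ from $C^-$ amounts to choosing the length of each super-edge, i.e., a composition of the remaining degree-$2$ vertices into $O(\gamma)$ parts, contributing a further $t^{O(\gamma)}$ factor.

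For the trees, I would argue that any tree $T$ with $t'$ edges and $\ell'$ leaves is determined by its skeleton (the tree obtained by suppressing internal degree-$2$ vertices) together with the lengths of the skeleton's edges. Since every non-leaf vertex of the skeleton has degree $\ge 3$, a counting argument shows the skeleton has at most $2\ell' - 2$ vertices, so there are at most $(\ell')^{O(\ell')} \le t^{O(\ell')}$ skeletons and at most $t^{O(\ell')}$ ways to assign path lengths. Summing $\ell'$ over all pendant and isolated trees gives $O(\ell)$, and choosing an attachment $2$-core vertex for each of at most $\ell$ pendant trees costs a further $t^{O(\ell)}$. Combining the $2$-core and tree bounds yields $N(t,\ell,\gamma,\zeta) \le t^{O(\gamma)} \cdot t^{O(\ell)} = t^{O(\ell + \gamma + \zeta)}$, as required.

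The main obstacle is the tree count: a tree on many vertices with $\ell'$ leaves can take many non-isomorphic shapes, and a naive bound is exponential in $t'$ rather than in $\ell'$. Obtaining $t^{O(\ell')}$ depends on routing the entire count through the $O(\ell')$-sized skeleton, so that both the combinatorial shape and the edge-length assignments are controlled by $\ell'$ rather than $t'$. Once the skeleton/path decomposition is in place for both the $2$-core and the trees, the rest is a routine product of upper bounds.
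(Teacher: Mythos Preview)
Your argument is correct and takes a genuinely different route from the paper's. The paper does not prove the bound from scratch: it invokes \cite[Lemma~2]{DW2003a}, which already establishes the connected case $N(t,\ell,\gamma,1)=t^{O(\ell+\gamma)}$, and then reduces the $\zeta$-component case to the connected one by observing that every such graph arises from some connected graph (with the same $\gamma$ and $\ell$, and $t+\zeta-1$ edges) by deleting $\zeta-1$ non-leaf, non-cycle edges; the factor $(t+\zeta)^{\zeta-1}$ for choosing which edges to delete supplies the extra $t^{O(\zeta)}$.

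Your approach is self-contained: the $2$-core/skeleton decomposition handles all components at once and in effect re-proves the connected case along the way. Your observation $\zeta\le\gamma+\ell$ lets you absorb $\zeta$ into the exponent immediately, whereas the paper carries it as a separate parameter throughout. Once the black box from \cite{DW2003a} is granted, the paper's reduction is shorter; your argument is more elementary and makes explicit \emph{why} the count is governed by $\ell$ and $\gamma$ (namely, both the $2$-core skeleton and the tree skeletons have $O(\gamma)$ resp.\ $O(\ell)$ edges, and everything else is path lengths). One small point to tidy up: components of the $2$-core that are pure cycles contain no vertex of degree~$\ge 3$, so your contraction to $C^-$ needs a one-line special case---there are at most $\gamma$ such components, each determined by its length, which is another $t^{O(\gamma)}$ factor.
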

\begin{proof}
In \cite[Lemma 2]{DW2003a} it is shown that $N(t,\ell,\gamma,1) = t^{O(\ell+\gamma)}$. 
Now note that each graph $G$ with cyclomatic number $\gamma$, $\zeta$ connected components, 
$t-\ell$ non-leaf edges, and $\ell$ leaf edges 
can be obtained from some connected graph $G'$ with cyclomatic number $\gamma$,
$t-\ell+\zeta-1$ non-leaf edges, and $\ell$ leaf edges by removing $\zeta-1$ non-leaf, non-cycle edges.
There are no more than $(t-\ell+\zeta-1)^{\zeta-1}$ ways for choosing the edges to be removed. 
This implies, using  \cite[Lemma 2]{DW2003a}: 
    \begin{align*} 
    N(t,\ell,\gamma,\zeta) & \le   N(t+\zeta-1,\ell,\gamma,1) \cdot (t-\ell+\zeta-1)^{\zeta-1}\\ 
    &\le (t+\zeta)^{O(\ell+\gamma)} \cdot (t+\zeta)^{\zeta}
    = (t+\zeta)^{O(\ell + \gamma + \zeta)} = t^{O(\ell + \gamma + \zeta)}.\tag*{\qed}
    \end{align*}
\end{proof}
We shall need more auxiliary graph properties:
A graph from $\mathcal{G}_m$
belongs to $\Lcyc$ if at most one connected component contains leaves 
(the \emph{leaf component}); for 
$K\ge1$ it belongs to $\LCY{K}$ if it has the following four properties:
\begin{enumerate} 
\item at most one connected component of $G$ contains leaves (i.\,e., $\LCY{K}\subseteq \Lcyc$);
\item the number $\zeta(G)$ of connected components is bounded by $K$;
\item if present, the leaf component of $G$ contains at most $K$ leaf and cycle edges;
\item the cyclomatic number $\gamma(G)$ is bounded by $K$.
\end{enumerate}
\begin{lemma}\label{lem:02}
If $T \subseteq U$ and $(h_1,h_2)$ is from $\ZZ$ such that $G(T,h_1,h_2) \in \LL$ and
$(h_1,h_2)$ is $T$-bad, then there exists a subset $T'$ of $T$ such that
$G(T',h_1,h_2) \in \LCY{4ck}$ and $(h_1,h_2)$ is $T'$-critical.
\end{lemma}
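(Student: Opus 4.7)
The plan is to construct $T'$ by iteratively removing keys from $T$. Starting with $T^{(0)}=T$, we produce a sequence $T^{(0)}\supsetneq T^{(1)}\supsetneq\cdots$; since removing one key drops each $|g_j|$ by at most $1$, the maximum $M = \max\{k,|g_1|,\dots,|g_c|\}$ drops by at most $1$, and therefore $d_{T^{(i+1)}}\in\{d_{T^{(i)}},d_{T^{(i)}}-1\}$. Since $d_{T^{(0)}}>k$ and $d_{T^{(i)}}\le |T^{(i)}|-k$ forces $d$ to fall below $k$ once $|T^{(i)}|\le 2k$, we can pick the smallest $i$ with $d_{T^{(i)}}=k$ and set $T'=T^{(i)}$; this gives $T'$-criticality by construction.

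The real work is to steer the removals so that at the stopping time $G(T',h_1,h_2)\in\LCY{4ck}$. My plan is a two-priority greedy rule: at step $i$, if $G(T^{(i)},h_1,h_2)$ violates one of the four defining conditions of $\LCY{4ck}$ (cyclomatic number $>4ck$, more than $4ck$ components, more than one leaf component, or more than $4ck$ leaf-plus-cycle edges in the leaf component), then remove a key whose edge in the cuckoo graph directly rectifies a violation (a non-bridge edge in an excess cycle, an edge from a surplus component, or a leaf/cycle edge of a surplus leaf component); otherwise, remove an arbitrary key. One shows that such structural ``simplification'' removals are always available and that we may always pick one whose removal does not decrease $d$, as long as $d>k$.

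The existence of such a removal rests on a pigeonhole estimate: when $d_{T^{(i)}}>k$, there are at least $k+1$ non-uniquely mapped keys under each $g_j$ achieving the current maximum, so the set $R$ of ``$d$-preserving'' keys (those whose removal leaves the max-achievers' images unchanged) has size at least $k+1$ in each of the $c$ coordinates; meanwhile, a $\LCY{4ck}$-violation witnesses a large set $S$ of ``graph-simplifying'' candidate edges (a cycle edge for each surplus cyclomatic unit, a boundary edge for each surplus component, etc.), totalling more than $4ck$. The $4$ in $4ck$ is there so that the counts for cyclomatic number, number of components, leaf-components, and leaf-plus-cycle edges inside the leaf component each get a budget of $ck$, matching the $c$ independent $g_j$'s with their $k$-sized budgets of redundant keys; then $|R|+|S|>|T^{(i)}|$ forces $R\cap S\ne\emptyset$ and yields a key that both preserves $d$ and reduces the graph complexity by one.

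The main obstacle is exactly this coordination step: the ``$d$-preserving'' set is determined by the $g_j$-components of $(h_1,h_2)$, while the ``graph-simplifying'' set is determined by the independent $f_i$- and $z_j^{(i)}$-components through $G(T^{(i)},h_1,h_2)$. Making the pigeonhole argument go through requires a careful bookkeeping that accounts separately for each of the four $\LCY{4ck}$ conditions; additionally, one has to verify that the leaflessness of the initial graph $G(T,h_1,h_2)$ is used, for instance to guarantee that the only way the ``at most one leaf component'' condition can be violated is through removals we have performed, so we can control its restoration locally. Once these technicalities are in place, the sequential argument terminates with the desired $T'$.
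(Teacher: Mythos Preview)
Your pigeonhole argument does not go through, and the confusion already shows in the definition of $R$. A key $x$ whose removal ``leaves the max-achievers' images unchanged'' is one for which $g_j(x)$ collides with some other $g_j(y)$ for every max-achieving $j$; removing such an $x$ keeps $\max_j|g_j(T')|$ fixed while $|T'|$ drops by one, so $d$ \emph{decreases} --- these are the $d$-decreasing keys, not the $d$-preserving ones. Even after fixing the sign, the inequality $|R|+|S|>|T^{(i)}|$ is not available. If $R$ is the set of genuinely $d$-preserving keys (those with $g_j(x)$ unique for all max-achieving $j$), then for each max-achieving $j$ the non-unique keys number between $d+1$ and $2d$, so the complement $T^{(i)}\setminus R$ can have size up to $2cd$; when $d$ is large (and it can be as large as $|T|-k$ initially) you get no useful lower bound on $|R|$. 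On the other side, ``more than $4ck$ simplifying edges'' is not enough: you would need $|S|>|T^{(i)}|-|R|$, which can be of order $|T^{(i)}|$. So there is no guarantee that a simplifying, $d$-preserving removal exists, and your greedy process cannot be steered as claimed.

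The paper avoids this coordination problem entirely by decoupling the two goals. First (Stage~1) it peels edges --- always a leaf edge if one exists, else a cycle edge --- until $d_{T'}=k$; this automatically keeps at most one leaf component. Then comes the key idea you are missing: one exhibits a small \emph{witness set} $T^\ast\subseteq T'$ with $|T^\ast|\le 2ck$ such that $(h_1,h_2)$ is already $T^\ast$-critical (built as $\bigcup_j T^\ast_j$ with $|T^\ast_j|=2k$ and $|g_j(T^\ast_j)|\le k$). Because $|X|-|g_j(X)|$ is monotone under set inclusion, every $T''$ with $T^\ast\subseteq T''\subseteq T'$ satisfies $d_{T''}\ge k$, and since removals cannot increase $d$ we get $d_{T''}=k$ automatically. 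Now one is free to delete any unmarked edges whatsoever to enforce the four $\LCY{4ck}$ conditions (Stages 2--4), with no interaction between the graph structure and the $g_j$-collision structure. The constant $4ck$ arises not from a four-way budget split as you suggest, but simply because $|T^\ast|\le 2ck$ marked edges limit the number of surviving components, leaf/cycle edges, and (via a separate claim about leafless graphs) the cyclomatic number.
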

\begin{proof}
Fix $T$ and $(h_1,h_2)$ as in the assumption. Initialize $T'$ as $T$. 
We will remove (``peel'') edges from $G(T',h_1,h_2)$ in four stages.
Of course, by ``removing edge $(h_1(x),h_2(x))$ from $G(T',h_1,h_2)$'' we mean
removing $x$ from $T'$.

\emph{Stage} 1: 
Initially, we have $d_{T'}((h_1,h_2)) > k$. 
Repeat the following step:
If $G(T',h_1,h_2)$ contains a leaf,
remove a leaf edge from it, otherwise remove a cycle edge.
Clearly, such steps maintain the property that $G(T',h_1,h_2)$ belongs to $\Lcyc$. 
Since $d_{T'}((h_1,h_2))$ can decrease by at most 1 when an edge is removed,
we finally reach a situation where $d_{T'}((h_1,h_2))=k$,
i.\,e., $(h_1,h_2)$ is $T'$-critical.
Then $G(T',h_1,h_2)$ satisfies Property 1 from the definition of $\LCY{4ck}$.

To prepare for the next stages, 
we define a set $T^\ast\subseteq T'$
with $2k \leq |T^\ast| \leq 2ck$,
capturing keys that have to be ``protected'' during the following stages
to maintain criticality of $T'$. 
If $|T'| = 2k$, we simply let $T^\ast = T'$. Then $(h_1,h_2)$ is $T^\ast$-critical. 
If $|T'| > 2k$, a little more work is needed.  
By the definition of $d_T((h_1,h_2))$ we have $|T'|-\max\{|g_1(T')|,\ldots, |g_c(T')|\} = k$.
For each $j \in \{1,\dots,c\}$ Lemma~\ref{lemma:random}(a) gives us a set $T^\ast_j \subseteq T'$ such that 
$|T^\ast_j| = 2k$ and $|g_j(T^\ast_j)| \leq k$. 
Let $T^\ast := T^\ast_1\, \cup\, \ldots\, \cup\, T^\ast_c$. 
Clearly, $2k \leq |T^\ast| \leq 2ck$. Since $T_j^\ast\subseteq T^\ast$,
we have $|T^\ast| - |g_j(T^\ast)| \geq |T_j^\ast| - |g_j(T_j^\ast)| \geq k$, for
$1\le j \le c$, and hence $d_{T^\ast}((h_1,h_2)) \ge k$.
On the other hand we know that there exists some $j$ with $|T'| - |g_j(T')| = k$.
Since $T^\ast\subseteq T'$, we have $|T^\ast| - |g_j(T^\ast)| \le k$ for this $j$.
Altogether we get $d_{T^\ast}((h_1,h_2)) = k$,
which means that $(h_1,h_2)$ is $T^\ast$-critical also in this case.

Now we ``mark'' all edges of $G=G(T',h_1,h_2)$ whose keys belong to $T^\ast$.

\emph{Stage} 2: Remove all components of $G$ without marked edges. Afterwards there are
at most $2ck$ components left, and $G$ satisfies Property $2$.

\emph{Stage} 3: If $G$ has a leaf
component $C$, repeatedly remove unmarked leaf and cycle edges 
from $C$, while $C$ has such edges. The remaining leaf and cycle edges in $C$ are marked, 
and thus there number is at most $2ck$; Property 3 is satisfied. 

\emph{Stage} 4: If there is a leaf component $C$ with $z$ marked edges (where $z\le2ck$),
then $\gamma(C) \leq z - 1$. Now consider a leaf\/less component $C'$ with cyclomatic number $z$. 
We need the following claim, which is proved in Appendix~\ref{app:claim:proof}.
\begin{claim}\label{lem:cyclic_cyclo_peeling}
Every leaf\/less connected graph with $i$ marked edges has a leaf\/less connected subgraph with 
cyclomatic number $\le i {+} 1$ that contains all marked edges. 
\end{claim}
This claim gives us a leaf\/less subgraph $C''$ of $C'$ with $\gamma(C'') \leq z+1$ that contains all marked edges of $C'$. 
We remove from $G$ all vertices and edges of $C'$ that are not in $C''$.
Doing this for all leaf\/less components
yields the final key set $T'$ and the final graph $G=G(T',h_1,h_2)$. 
Summing contributions to the cyclomatic number of $G$
over all (at most $2ck$) connected components, we see that $\gamma(G) \leq 4ck$; Property 4 is satisfied. \qed
\end{proof}
What have we achieved? 
By Lemma~\ref{lem:02}, we can bound $\Pr(B^\LL_S)$ by just adding, over all 
$T'\subseteq S$, 
the probabilities $\Pr(\LCY{4ck}_{T'} \cap \text{crit}_{T'})$,
that means, the terms $\Pr(\LCY{4ck}_{T'} \mid \text{crit}_{T'})\cdot\Pr(\text{crit}_{T'})$.
Lemma~\ref{lemma:random}(a) takes care of the second factor.
By Lemma~\ref{lemma:random}(b), we may assume that $(h_1,h_2)$ acts fully random on $T'$ for the first factor.
The next lemma estimates this factor, using the notation from Section~\ref{sec:graph:properties}.
\begin{lemma}\label{lem:fully:random}
 Let $T \subseteq U, |T|=t$, and $c,k \geq 1$. Then
 $p^{\LCY{4ck}}_T \!\! \leq t! \cdot t^{O(1)}/m^{t - 1}.$
\end{lemma}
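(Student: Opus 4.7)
The plan is to obtain the bound by a union bound over the possible ``shapes'' of $G(T,h_1^*,h_2^*)$ when it falls in $\LCY{4ck}$: I would enumerate the isomorphism classes of bipartite (multi-)graphs in $\LCY{4ck}$ with $t$ edges, bound the probability of each class individually, and then multiply by the number of classes supplied by Lemma~\ref{lem:num_graphs}.

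Fix an isomorphism class of an unlabeled bipartite multigraph $H$ with $t$ edges, $v$ non-isolated vertices (split as $v_L + v_R$ between the two sides), cyclomatic number $\gamma$, $\zeta$ connected components, and $\ell$ leaf edges. I would bound the probability that $G(T,h_1^*,h_2^*)$ is isomorphic to $H$ as follows: choose an injection of the vertex set of $H$ into $[m]$ on each side (at most $m^{v_L}\cdot m^{v_R}=m^v$ ways), choose a bijection from $T$ to the edge set of $H$ ($t!$ ways), and note that any fixed assignment $x\mapsto(h_1^*(x),h_2^*(x))$ has probability exactly $1/m^{2t}$ under full randomness. Together this gives $\Pr(G(T,h_1^*,h_2^*) \cong H) \le m^v \cdot t!/m^{2t}$.

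Next I would invoke the Euler identity for graphs without isolated vertices, $\gamma = t - v + \zeta$, which turns the per-class bound into $t!/m^{t+\gamma-\zeta}$. The crucial structural observation is that by Property~1 of $\LCY{4ck}$, at most one connected component carries leaves; every other component is leafless and, having at least one edge, must contain a cycle, thus contributing at least $1$ to $\gamma$. Hence $\gamma \ge \zeta - 1$ and the per-class probability is at most $t!/m^{t-1}$.

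Finally, by Lemma~\ref{lem:num_graphs} the number of non-isomorphic graphs with parameters $(\ell,\gamma,\zeta)$ is $t^{O(\ell+\gamma+\zeta)}$, and since membership in $\LCY{4ck}$ forces $\ell,\gamma,\zeta \le 4ck = O(1)$, summing over all admissible parameter triples yields at most $t^{O(1)}$ classes. Multiplying with the per-class probability gives the claimed bound $p^{\LCY{4ck}}_T \le t!\cdot t^{O(1)}/m^{t-1}$. The main subtlety is the inequality $\gamma \ge \zeta - 1$: without exploiting the leafless structure of all but one component, the naive exponent in the denominator would be $m^{t-\zeta}$, which is too weak for small $\gamma$, and it is precisely Property~1 of $\LCY{K}$ that prevents this loss.
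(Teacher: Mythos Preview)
Your proof is correct and follows essentially the same route as the paper: count the $t^{O(1)}$ isomorphism classes via Lemma~\ref{lem:num_graphs}, bound each class by $m^{v}\cdot t!/m^{2t}$, and show $v\le t+1$. The paper phrases the vertex bound directly (``cyclic components have at most as many nodes as edges; the single leaf component at most one more''), whereas you obtain it via the Euler relation $v=t-\gamma+\zeta$ together with $\gamma\ge \zeta-1$; these are the same observation in different clothing, and the paper additionally makes explicit the harmless $2^{\zeta}=O(1)$ factor for choosing the side of the bipartition in each component.
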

\begin{proof}
By Lemma~\ref{lem:num_graphs}, there are at most $t^{O(ck)} =
t^{O(1)}$ ways to choose a bipartite graph $G$ in $\LCY{4ck}$ with $t$ edges.
Graph $G$ cannot have more than $t+1$ nodes, since cyclic
components have at most as many nodes as edges, and in the single leaf component, if 
present, the number of nodes is at most one bigger than the number of edges.  
In each component of $G$, there are two ways to assign the vertices to the two sides of the
bipartition. After such an assignment is fixed, there are at most $m^{t+1}$ ways
to label the vertices with elements of $[m]$, and there are $t!$ ways
to label the edges of $G$ with the keys in $T$. Assume now such labels have been chosen for $G$.
Draw $t$ edges $(h_1^\ast(x),h_2^\ast(x))$ from $[m]^2$ uniformly at random.
The probability that they exactly fit the labeling of nodes and edges of $G$ is $1/m^{2t}$.  
Thus, $p^{\LCY{4ck}}_T \leq m^{t+1}\cdot t! \cdot t^{O(1)}/m^{2t} = t! \cdot
t^{O(1)}/m^{t-1}$. \qed
\end{proof}
We can now finally prove Lemma~\ref{lem:leafless}, the main lemma of this section.
\begin{proof}[of Lemma~\ref{lem:leafless}]
By Lemma~\ref{lem:02}, and using the union bound, we get
\begin{align*}
\Pr(B^\LL_S)&= \Pr(\exists T \subseteq S: \LL_T \cap \text{bad}_T) \leq
\Pr(\exists T' \subseteq S: \LCY{4ck}_{T'} \cap \text{crit}_{T'})\\
&\leq \sum_{T' \subseteq S} \Pr( \LCY{4ck}_{T'} \mid \text{crit}_{T'})
\cdot
\Pr(\text{crit}_{T'}) =: \rho_S.
\end{align*}
By Lemma~\ref{lemma:random}(b),
given the event that $(h_1,h_2)$ is $T'$-critical, 
$(h_1,h_2)$ acts fully random on $T'$. 
Using Lemma~\ref{lem:fully:random} and
Lemma~\ref{lemma:random}(a), this yields:
$$\Pr(\LCY{4ck}_{T'} \mid \text{crit}_{T'}) \cdot \Pr(\text{crit}_{T'}) \leq (|T'|! \cdot
|T'|^{O(1)}/m^{|T'| - 1}) \cdot (|T'|^2/\ell)^{ck}.$$
Summing up, collecting sets $T'$ of equal size together, and using that $ck$ is
constant, we obtain
\begin{equation}
 \rho_S \leq \!\!\!\sum_{2k\le t\le n}\! \binom{n}{t} \cdot
\frac{t! \cdot t^{O(1)}}{m^{t-1}} \cdot \left(\frac{t^2}{\ell}\right)^{ck}
\!\leq   \frac{n}{\ell^{ck}} \cdot \!\sum_{2k\le t\le n}\! \frac{t^{O(1)} }{ (1
+ \varepsilon)^{t-1} } = O\paren{\frac{n}{\ell^{ck}}}.\tag*{\qed}
\label{eq:980}
\end{equation}
\end{proof}

\section{Cuckoo Hashing With a Stash}\label{sec:stash}

In this section we prove the desired bound on the rehash probability of cuckoo hashing
with a stash when functions from $\ZZ$ are used. 
We focus on the question whether the pair $(h_1,h_2)$ allows storing key set $S$
in the two tables with a stash of size $s$.
In view of Lemma~\ref{lemma:stash:excess}, we identify minimal graphs with
excess $s+1$. 
\begin{definition}
     An \emph{excess-$(s+1)$ core graph} is a leafless graph $G$ with excess
exactly $s + 1$ in which
     all connected components have at least two cycles.
     By $\CS{s+1}$ we denote the set of all excess-$(s+1)$ core graphs in
$\mathcal{G}_m$.
\end{definition}
\begin{lemma}\label{lem:excess_core_structure}
    Let $G=G(S,h_1,h_2)$ be a cuckoo graph with $\ex(G)\geq s + 1$.
    Then $G$ contains an excess-$(s+1)$ core graph as a subgraph. 
\end{lemma}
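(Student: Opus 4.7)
The plan is to construct the desired subgraph by two reduction phases. First I would form the 2-core $G_0$ of $G$ by iteratively removing leaves together with their incident edges. Each such peeling step removes exactly one vertex and one edge, hence leaves the cyclomatic number of every connected component unchanged: acyclic components disappear, unicyclic components shrink to single cycles, and every component with cyclomatic number $\ge 2$ becomes a leafless subgraph of the same cyclomatic number. Using the identity $\ex(H)=\sum_{C}\max(0,\gamma(C)-1)$ summed over connected components of $H$, this implies $\ex(G_0)=\ex(G)\ge s+1$. I would then discard every unicyclic component of $G_0$ (the isolated cycles) to obtain $G_1$, which is leafless, satisfies $\gamma(C)\ge 2$ for every component $C$, and still has $\ex(G_1)=\ex(G_0)\ge s+1$.

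The second phase trims $G_1$ until the excess equals exactly $s+1$, while preserving both leaflessness and the ``$\gamma(C)\ge 2$ per component'' invariant. The elementary reduction step is: if the current graph has a component $C$ with $\gamma(C)\ge 3$, choose any cycle edge $e$ of $C$, remove it, and then peel any leaves this may have created; the resulting connected component $C'$ is leafless with $\gamma(C')=\gamma(C)-1\ge 2$, so the excess drops by exactly $1$. Otherwise every component has $\gamma=2$, and I simply delete one such component in its entirety, which again lowers the excess by exactly $1$ without affecting the invariant. Iterating this step $\ex(G_1)-(s+1)$ times yields a subgraph of $G$ that is leafless, has excess exactly $s+1$, and in which every connected component has at least two cycles — i.e., an excess-$(s+1)$ core graph in $\CS{s+1}$.

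The main point requiring care is the $\gamma(C)\ge 3$ case: one must justify that deleting a cycle edge followed by leaf-peeling keeps $C$ connected and drops $\gamma(C)$ by exactly $1$. The first half follows from the definition of ``cycle edge'' (its removal does not disconnect any two nodes) together with the fact that leaf-peeling never disconnects a component. For the cyclomatic number, for connected $C$ we have $\gamma(C)=|E(C)|-|V(C)|+1$; removing one edge drops this by $1$, and every subsequent leaf-peeling step removes one vertex and one edge simultaneously, hence leaves $\gamma$ invariant. Since we start from $\gamma(C)\ge 3$ we end at $\gamma(C')\ge 2>0$, so $C'$ is in particular nonempty, and the invariants are preserved throughout the reduction.
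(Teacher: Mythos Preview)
Your proof is correct and uses the same three ingredients as the paper's proof---peeling leaves, discarding tree/unicyclic components, and removing cycle edges to lower the excess---so the approaches are essentially the same. The only difference is the order of operations: the paper first removes cycle edges until $\ex=s+1$, then deletes tree/unicyclic components, and only at the end peels leaves; doing the leaf-peeling last means that no re-peeling is needed after a cycle-edge removal and the $\gamma(C)=2$ case never has to be handled separately, which is why the paper's argument fits in three lines.
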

\begin{proof}
We repeatedly remove edges from $G$.
First we remove cycle edges until the excess is exactly $s+1$.
Then we remove components that are trees or unicyclic.
Finally we  remove leaf edges one by one until the remaining 
graph is leaf\/less. 
\qed
\end{proof}
We are now ready to state our main theorem.
\begin{theorem}\label{thm:r_excess_prob_zz}\samepage
    Let $\varepsilon > 0$ and $0 < \delta <1$, let  $s\ge0$ and $k\ge1$ be
given. Assume $c\ge  (s+2)/(\delta k) $.  
    For $n\ge 1$ consider $m\ge(1+\varepsilon)n$ and $\ell=n^\delta$.  
    Let $S \subseteq U$ with $|S|=n$.
    Then for $(h_1,h_2)$ chosen at random from $\ZZ=\ZZ^{2k,c}_{\ell,m}$ the
following holds:
    $$\Pr(\ex(G(S,h_1,h_2)) \geq s+1) =
O(1/n^{s+1}).$$
\end{theorem}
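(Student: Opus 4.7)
The plan is to instantiate the framework of Sections~\ref{sec:graph:properties}--\ref{sec:leafless} with the graph property $\mathcal{A} = \CS{s+1}$. By Lemma~\ref{lem:excess_core_structure}, the event $\ex(G(S,h_1,h_2)) \ge s+1$ entails $G(T,h_1,h_2) \in \CS{s+1}$ for some $T \subseteq S$, so it suffices to bound $\Pr(\exists T \subseteq S : \CS{s+1}_T)$. Applying Lemma~\ref{lem:good:bad} splits this probability into $\Pr(B^{\CS{s+1}}_S)$ and $\sum_{T \subseteq S} p^{\CS{s+1}}_T$; I will show that each summand is $O(1/n^{s+1})$.

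The first summand is essentially already handled. Since every excess-$(s+1)$ core graph is leaf\/less, $\CS{s+1} \subseteq \LL$, so Lemma~\ref{lem:subproperty} and Lemma~\ref{lem:leafless} give $\Pr(B^{\CS{s+1}}_S) \le \Pr(B^\LL_S) = O(n/\ell^{ck})$. With $\ell = n^\delta$ and the hypothesis $c \ge (s+2)/(\delta k)$, we have $\ell^{ck} \ge n^{s+2}$, making the bound $O(1/n^{s+1})$. This is exactly what the choice of $c$ was designed to achieve.

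The second summand requires a fully random counting argument in the spirit of Lemma~\ref{lem:fully:random}, but producing an extra factor of $1/n^{s+1}$. The key structural observation is that for $G \in \CS{s+1}$ with $t$ edges and $\zeta$ non-trivial components, each component has cyclomatic number at least $2$, so $2\zeta \le \gamma(G)$; combined with $\gamma(G) = \zeta + \ex(G) = \zeta + s + 1$, this forces $\zeta \le s+1$ and $\gamma(G) \le 2(s+1)$. By the Euler formula $\gamma(G) = |E| - |V| + \zeta$ (applied to the subgraph on the non-isolated vertices), $G$ has exactly $t - s - 1$ non-isolated vertices, which is $s+1$ fewer than the generic bound $t+1$ used in Lemma~\ref{lem:fully:random}. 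This is the source of the additional $1/m^{s+1}$ factor.

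Carrying this through, Lemma~\ref{lem:num_graphs} yields $N(t, 0, s+1+\zeta, \zeta) = t^{O(s)}$ isomorphism classes; each can be embedded in $[m] \sqcup [m]$ in $O(m^{t-s-1})$ ways (the $2^\zeta \le 2^{s+1}$ bipartition choices are absorbed into the $t^{O(s)}$ term), with $t!$ ways to assign the keys of $T$ to edges, giving $p^{\CS{s+1}}_T \le t^{O(s)} \cdot t! / m^{t+s+1}$ under fully random hashing. Using $\binom{n}{t} t! \le n^t$ and $m \ge (1+\varepsilon)n$, one obtains $\sum_{T \subseteq S} p^{\CS{s+1}}_T \le (1/m^{s+1}) \sum_{t \ge s+2} t^{O(s)}/(1+\varepsilon)^t = O(1/n^{s+1})$. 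I expect the main obstacle to be the structural accounting for $\CS{s+1}$-graphs, in particular pinning down the vertex count $t - s - 1$ and the component bound $\zeta \le s+1$; once these are in hand, the graph-counting step is entirely routine.
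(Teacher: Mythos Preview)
Your proposal is correct and follows essentially the same route as the paper's own proof: split via Lemma~\ref{lem:good:bad}, handle $\Pr(B^{\CS{s+1}}_S)$ through $\CS{s+1}\subseteq\LL$ and Lemma~\ref{lem:leafless}, and bound $\sum_T p^{\CS{s+1}}_T$ by the same structural count (components $\le s+1$, cyclomatic number $\le 2(s+1)$, vertex count $t-s-1$, then Lemma~\ref{lem:num_graphs}). The only cosmetic difference is that the paper starts the sum at $t\ge s+3$ (since a bipartite leafless graph needs at least two non-isolated vertices, forcing $t-s-1\ge 2$), whereas you wrote $t\ge s+2$; this is harmless for the estimate.
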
   
\begin{proof}
By Lemma~\ref{lem:good:bad} and Lemma~\ref{lem:excess_core_structure}, 
the probability that the excess of $G(S,h_1,h_2)$ is at least $s+1$ 
is
\begin{equation}
\Pr(\exists T\subseteq S \colon \CS{s+1}_T) \leq \Pr(B^{\CS{s+1}}_S) + \sum_{T
\subseteq S} p^{\CS{s+1}}_T.
\label{eq:1500}
\end{equation}
Since $\CS{s+1}\subseteq \LL$, we can combine
Lemmas~\ref{lem:subproperty} and~\ref{lem:leafless} to obtain
$ \Pr(B^{\CS{s+1}}_S) \le \Pr(B^{\LL}_S) = O(n / \ell^{ck})$. 
It remains to bound the second summand in (\ref{eq:1500}).%
\footnote{We remark that the following
calculations also give an alternative, simpler proof of
\cite[Theorem 2.1]{stash} for the fully random case,
even if the effort needed to prove Lemma~\ref{lem:num_graphs}
and~\cite[Lemma 2]{DW2003a} is taken into account.} 
\begin{lemma}\label{lem:stash:fully:random}
${\sum_{T \subseteq S} p^{\CS{s+1}}_T = O(1/n^{s+1})}$.
\end{lemma}
\begin{proof}
We start by counting (unlabeled) excess-$(s+1)$ core graphs with $t$ edges. 
A connected component $C$ of such a graph $G$ with cyclomatic number $\gamma(C)$ (which is at least 2) 
contributes $\gamma(C)-1$ to the excess of $G$. 
This means that if $G$ has $\zeta=\zeta(G)$ components,
then $s+1 = \gamma(G) - \zeta$ and $\zeta \le s+1$, and hence $\gamma = \gamma(G) \le 2(s+1)$.
Using Lemma~\ref{lem:num_graphs}, there are at most 
$N(t,0,\gamma,\zeta)=t^{O(\gamma+\zeta)} = t^{O(s)}$ such graphs $G$. 
If from each component $C$ of such a graph $G$ we remove $\gamma(C)-1$ cycle edges, we get unicyclic components, 
which have as many nodes as edges. This implies that $G$ has $t-(s+1)$ nodes.

Now fix a bipartite (unlabeled) excess-$(s+1)$ core graph $G$ with $t$ edges and $\zeta$ components,
and let $T\subseteq U$ with $|T|=t$ be given. 
There are $2^\zeta \le 2^{s+1}$ ways of assigning the $t-s-1$ nodes to the two sides of the bipartition,
and then at most $m^{t-s-1}$ ways of assigning labels from $[m]$ to the nodes.
Thus, the number of bipartite graphs with property $\CS{s+1}$,
where each node is labeled with one side of the bipartition and an element of $[m]$,
and where the $t$ edges are labeled with distinct elements of $T$ is smaller than
$t!\cdot 2^{s+1} \cdot m^{t-s-1} \cdot t^{O(s)}$. 

Now if $G$ with such a labeling is fixed, and we choose $t$ edges from $[m]^2$ uniformly at random,
the probability that all edges $(h_1(x),h_2(x))$, $x\in T$, match the 
labeling is $1/m^{2t}$. For constant $s$, this yields the following bound: 
\begin{align*}
  \sum_{T \subseteq S} p^{\CS{s+1}}_T 
 &\leq \sum_{s+3 \le t \le n} \; \binom{n}{t}\frac{2^{s+1} \cdot n^{t - s - 1}
\cdot t! \cdot t^{O(s)}}{m^{2t}}
 \leq  \frac{2^{s+1}}{n^{s+1}} \cdot \sum_{s+3 \le t \le n} \frac{n^{t}
\cdot
t^{O(1)}}{m^{t}}\\
& = O\paren{\frac{1}{n^{s+1}}} \cdot \sum_{s+3 \le t \le n} \frac{t^{O(1)}}{(1+\varepsilon)^t} =
O\paren{\frac{1}{n^{s+1}}}.\tag*{\qed}
\end{align*}
\end{proof}
Since $\ell=n^\delta$ and $c\ge
(s+2)/(k\delta)$, we get
\begin{equation*}
 \Pr(\ex(G) \geq s + 1) = O(n / \ell^{ck}) + O(1/n^{s+1}) =
O(1/n^{s+1}).\tag*{\qed} 
\end{equation*}

\end{proof}

\section{Simulating Uniform Hashing}\label{sec:uniform_hashing}
In the following, let $R$ be the range of the hash function to construct, 
and assume that $(R,\oplus)$ is a commutative group. (We could use $R=[t]$ with
addition mod $t$.) 
\begin{theorem}\label{thm:uniform_hashing}
Let $n \geq 1, 0 < \delta < 1$, $\varepsilon > 0$, and $s\ge0$ be
given.
There exists a data structure \emph{DS$_n$} that allows
us to compute a function $h\colon U \rightarrow R$ such that\emph{:}\\
\makebox[2em][r]{\textnormal{(i) }}For each $S \subseteq U$ of size $n$ there is an event
$B_S$ of probability $O(1/n^{s+1})$ \linebreak 
  \phantom{\makebox[2em][r]{\textnormal{(i) }}}such that conditioned on $\overline{B_S}$ the function $h$ is distributed uniformly on $S$.\\
\makebox[2em][r]{\textnormal{(ii) }}For arbitrary $x \in U$, $h(x)$ can be evaluated in time
$O(s/\delta)$.\\
\makebox[2em][r]{\textnormal{(iii) }}\emph{DS$_n$} 
comprises $2(1+\varepsilon)n + O(sn^\delta)$ words from $R$ and $O(s)$ words from $U$. 
\end{theorem}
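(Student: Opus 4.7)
The plan is to mirror the Pagh--Pagh construction~\cite{pagh_uniform} with our class $\ZZ$ in the role of Siegel's hash functions. Set $m=(1+\varepsilon)n$, $\ell=n^\delta$, and $c=\lceil(s+2)/(k\delta)\rceil$ for some fixed $k\ge 1$. The data structure DS$_n$ consists of a pair $(h_1,h_2)$ drawn uniformly from $\ZZ^{2k,c}_{\ell,m}$; two tables $a_1,a_2\in R^m$ with entries i.i.d.\ uniform in $R$; and an auxiliary function $h_0\colon U\to R$ of the same $\ZZ$-style form,
\[
  h_0(x) = f_0(x) \oplus \bigoplus_{j=1}^{c} w_j[g'_j(x)],
\]
with $f_0$ drawn from a $2k$-wise independent family with range $R$, each $g'_j\in\HH^{2k}_\ell$, and tables $w_j\in R^\ell$ drawn independently. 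The simulated hash value is
\[
  h(x) = h_0(x)\oplus a_1[h_1(x)]\oplus a_2[h_2(x)].
\]
Properties (ii) and (iii) are immediate: evaluating $h$ costs $O(c)=O(s/\delta)$ hash-function evaluations and table look-ups; the space is $2(1+\varepsilon)n$ words from $R$ for $a_1,a_2$, $O(c\ell)=O(sn^\delta)$ additional words from $R$ for $w_1,\ldots,w_c$ (together with a similar number of words from $[m]$ for the $z$-tables inside $(h_1,h_2)$), and $O(c)=O(s)$ words from $U$ for the descriptions of the $2k$-wise independent functions.

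For property (i), let $B_S = B_S^{\text{cuckoo}}\cup B_S^{\text{aux}}$. The event $B_S^{\text{cuckoo}}=\{\ex(G(S,h_1,h_2))>s\}$ has probability $O(1/n^{s+1})$ by Theorem~\ref{thm:r_excess_prob_zz}. On $\overline{B_S^{\text{cuckoo}}}$, Lemma~\ref{lemma:stash:excess} lets us fix a stash $S^\ast$ of size at most $s$ such that every component of $G(S\setminus S^\ast,h_1,h_2)$ is a tree or unicyclic; select one distinguished cycle key per unicyclic component to form $T^\ast$. Define $B_S^{\text{aux}}$ as the event that the $g'$-structure of $h_0$ is \emph{bad} (in the sense of Definition~\ref{def:T:bad}) on $S^\ast\cup T^\ast$. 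Since $h_0$ is drawn independently of $(h_1,h_2)$, conditioning on $(h_1,h_2)$ fixes $S^\ast\cup T^\ast$, and the analogue of Lemma~\ref{lemma:random}(a) for the single function $h_0$ bounds $\Pr(B_S^{\text{aux}}\mid(h_1,h_2))$ by $(|S^\ast\cup T^\ast|^{2k}/\ell^k)^c$. Provided $|S^\ast\cup T^\ast|=O(s)$ holds except on an $O(1/n^{s+1})$-event, the choice $c\ge(s+2)/(k\delta)$ yields $\Pr(B_S^{\text{aux}})=O(1/n^{s+1})$, and hence $\Pr(B_S)=O(1/n^{s+1})$.

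Conditioned on $\overline{B_S}$ and for any target $y\in R^n$, uniformity of $h$ on $S$ follows by a Pagh--Pagh-style peeling argument. Process the stash keys $S^\ast$ first: the analogue of Lemma~\ref{lemma:random}(b) for $h_0$ guarantees, on $\overline{B_S^{\text{aux}}}$, that the values $h_0(x)$ for $x\in S^\ast$ are jointly uniform and independent, so each event $h(x)=y_x$ contributes a conditional factor $1/|R|$. Next, for each unicyclic component process its distinguished representative in $T^\ast$, again using the freshness of $h_0$ for a factor $1/|R|$. After these steps the residual graph is a forest; peel it leaves-inward so that each remaining key consumes a ``fresh'' cell of $a_1$ or $a_2$ not used to fix any earlier $h$-value, contributing another $1/|R|$. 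Multiplying the $n$ factors yields $\Pr\bigl(h(x)=y_x\text{ for all } x\in S\mid \overline{B_S}\bigr) = 1/|R|^n$.

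The hard part is establishing the size bound $|S^\ast\cup T^\ast|=O(s)$ with the required failure probability $O(1/n^{s+1})$: although $|S^\ast|\le s$ holds unconditionally, the number of unicyclic components of the peeled cuckoo graph (equivalently $|T^\ast|$) is a random variable and must be controlled by a tail estimate. Such an estimate should be obtainable by adapting the leaf\/less-graph counting of Section~\ref{sec:leafless}---in particular Lemma~\ref{lem:num_graphs} and the full-randomness bound of Lemma~\ref{lem:fully:random}---to cycle-counting for the bipartite cuckoo graph. The choice $c\ge(s+2)/(k\delta)$ is tuned precisely so that both the cuckoo excess bound (Theorem~\ref{thm:r_excess_prob_zz}) and the auxiliary $h_0$-bound come out at $O(1/n^{s+1})$ simultaneously.
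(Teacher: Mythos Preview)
Your construction and argument diverge from the paper's in a way that leaves a real gap—the very one you flag at the end. You introduce \emph{independent} $g'_j$ for the auxiliary function $h_0$, take $B_S^{\text{cuckoo}}=\{\ex(G)>s\}$ as (part of) the bad event, and then need full randomness of $h_0$ on the set $S^\ast\cup T^\ast$, whose size you must bound by $O(s)$ with failure probability $O(1/n^{s+1})$. You do not prove this cycle-count tail bound; you only indicate that it ``should be obtainable.'' Until that is established, property~(i) is not proved.

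The paper sidesteps this difficulty entirely by two design choices. First, the auxiliary part $f(x)\oplus\bigoplus_j y_j[g_j(x)]$ \emph{reuses the same} $g_1,\ldots,g_c$ that sit inside $(h_1,h_2)$; it does not draw fresh $g'_j$. Second, the bad event is taken to be $B^{\LL}_S$ from Section~\ref{sec:leafless} (not the excess event), with $\Pr(B^{\LL}_S)=O(n/\ell^{ck})=O(1/n^{s+1})$ by Lemma~\ref{lem:leafless}. Now fix any $(h_1,h_2)\notin B^{\LL}_S$ and let $T$ be the $2$-core of $G(S,h_1,h_2)$. Since $G(T,h_1,h_2)$ is leafless and $(h_1,h_2)\notin B^{\LL}_S$, the pair is $T$-good; because the auxiliary part uses the \emph{same} $g_j$, Lemma~\ref{lemma:random}(b) immediately gives that $h_0$ (and hence $h$) is fully random on $T$---with no restriction whatsoever on $|T|$. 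The remaining keys $S\setminus T$ are exactly those removed by iterated leaf-peeling, so the tables $t_1,t_2$ supply fresh randomness for them in the standard way.

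In short: by reusing the $g_j$ and choosing the bad event to be $B^{\LL}_S$, the paper gets full randomness of the auxiliary function on the entire $2$-core for free, making any bound on the number of unicyclic components unnecessary. Your route could perhaps be completed by proving the cycle-count tail bound, but the paper's argument is both shorter and avoids Theorem~\ref{thm:r_excess_prob_zz} altogether (only Lemma~\ref{lem:leafless} is used).
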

\begin{proof}
Let $k\geq 1$ and choose $c\ge (s+2)/(k\delta)$. Given $U$ and $n$, set up 
DS$_n$ as follows.
Let $m=(1+\varepsilon)n$ and $\ell=n^\delta$, and choose and store a hash function pair
$(h_1,h_2)$ from $\ZZ=\ZZ^{2k,c}_{\ell,m}$ (see Definition~\ref{def:family:Z}),
with component functions $g_1,\ldots,g_c$ from $\HH^{2k}_\ell$.
In addition, choose $2$ random vectors $t_1,t_2 \in R^m, c$ random vectors
$y_1,\dots,y_c \in R^\ell$, and choose $f$ at random from a $2k$-wise
independent
family of hash functions from $U$ to $R$. 

Using DS$_n$, the mapping $h\colon U\to R$ is defined as follows: 
$$
h(x) = t_1[h_1(x)] \oplus t_2[h_2(x)] \oplus f(x) \oplus y_1[g_1(x)] \oplus \ldots \oplus y_c[g_c(x)].
$$  
DS$_n$ satisfies (ii) and (iii) of
Theorem~\ref{thm:uniform_hashing}. We show that it satisfies (i) as well.

First, consider only the hash functions $(h_1,h_2)$ from $\ZZ$.
By Lemma~\ref{lem:leafless} we have $\Pr(B^{\LL}_{S}) =
O(n/\ell^{ck}) = O(1/n^{s+1})$.
Now fix $(h_1,h_2)\notin {B^{\LL}_{S}}$, which includes fixing the components $g_1,\ldots,g_c$.
Let $T\subseteq S$ be such that $G(T,h_1,h_2)$ is the 2-core of $G(S,h_1,h_2)$,
the maximal subgraph with minimum degree 2. Graph $G(T,h_1,h_2)$ is leaf\/less, 
and since $(h_1,h_2)\notin {B^{\LL}_{S}}$, we have that $(h_1,h_2)$ is $T$-good.
Now we note that the part $f(x) \oplus \bigoplus_{1\le j \le c}y_j[g_j(x)]$ of
$h(x)$
acts exactly as one of our hash functions $h_1$ and $h_2$ (see Definition~\ref{def:family:Z}(a)), 
so that arguing as in the proof of Lemma~\ref{lemma:random}
we see that $h(x)$ is fully random on $T$.

Now assume that $f$ and the entries in the tables $y_1,\ldots,y_c$ are fixed. 
It is not hard to show that the random entries in $t_1$ and $t_2$ alone make
sure 
that $h(x)$, $x\in S-T$, is fully random. 
(Such proofs were given in \cite{DW2003a} and \cite{pagh_uniform}.)  
\qed
\end{proof}

\subsection*{Concluding Remarks}
We presented a family of efficient hash functions and 
showed that it exhibits sufficiently strong random properties 
to run cuckoo hashing with a stash, 
preserving the favorable performance guarantees of this hashing scheme.
We also described a simple construction for simulating uniform hashing. 
We remark that the performance of our construction 
can be improved by using $2$-universal hash families\footnote{A family $\mathcal{H}$ of hash functions with range $R$ 
is $2$-universal if for each pair $x,y \in U$, $x \neq y$, 
and $h$ chosen at random from $\mathcal{H}$ we have $\Pr(h(x)=h(y)) \leq 2/|R|$.} 
(see, e.\,g., \cite{CarterW79,DietzfelbingerHKP97}) for the $g_j$-components. 
The proof of Lemma~\ref{lemma:random} can be adapted easily to these weaker families.
It remains open whether generalized cuckoo hashing 
\cite{FotakisPSS05,DietzfelbingerW07} can be run with efficient hash families.
\bibliographystyle{splncs03}
\bibliography{lit}

\begin{thebibliography}{10}
\providecommand{\url}[1]{\texttt{#1}}
\providecommand{\urlprefix}{URL }

\bibitem{Calkin97}
Calkin, N.J.: Dependent sets of constant weight binary vectors.
  \emph{Combinatorics, Probability and Computing}  \textbf{6}(3),  263--271
  (1997)

\bibitem{CarterW79}
Carter, L., Wegman, M.N.: Universal classes of hash functions. \emph{J. Comput.
  Syst. Sci.}  \textbf{18}(2),  143--154 (1979)

\bibitem{devroye}
Devroye, L., Morin, P.: Cuckoo hashing: Further analysis. \emph{Inf. Process.
  Lett.}  \textbf{86}(4),  215--219 (2003)

\bibitem{Diestel}
Diestel, R.: \emph{Graph Theory}. Springer (2005)

\bibitem{DietzfelbingerHKP97}
Dietzfelbinger, M., Hagerup, T., Katajainen, J., Penttonen, M.: A reliable
  randomized algorithm for the closest-pair problem. \emph{J. Algorithms}
  \textbf{25}(1),  19--51 (1997)

\bibitem{DietzfelbingerR09}
Dietzfelbinger, M., Rink, M.: Applications of a splitting trick. In: Proc. 36th
  ICALP (1). pp. 354--365. LNCS 5555, Springer (2009)

\bibitem{DS09b}
Dietzfelbinger, M., Schellbach, U.: On risks of using cuckoo hashing with
  simple universal hash classes. In: Proc. 20th Ann. ACM-SIAM Symp. on Discrete
  Algorithms (SODA). pp. 795--804 (2009)

\bibitem{DietzfelbingerW07}
Dietzfelbinger, M., Weidling, C.: Balanced allocation and dictionaries with
  tightly packed constant size bins. \emph{Theor. Comput. Sci.}
  \textbf{380}(1-2),  47--68 (2007)

\bibitem{DW2003a}
Dietzfelbinger, M., Woelfel, P.: Almost random graphs with simple hash
  functions. In: Proc. 35th ACM Symp. on Theory of Computing (STOC). pp.
  629--638. New York, NY, USA (2003)

\bibitem{FotakisPSS05}
Fotakis, D., Pagh, R., Sanders, P., Spirakis, P.G.: Space efficient hash tables
  with worst case constant access time. \emph{Theory Comput. Syst.}
  \textbf{38}(2),  229--248 (2005)

\bibitem{stash}
Kirsch, A., Mitzenmacher, M., Wieder, U.: More robust hashing: Cuckoo hashing
  with a stash. In: Proc. 16th ESA 2008. pp. 611--622. LNCS 5193, Springer
  (2008)

\bibitem{stash:journal:09}
Kirsch, A., Mitzenmacher, M., Wieder, U.: More robust hashing: Cuckoo hashing
  with a stash. \emph{SIAM J. Comput.}  \textbf{39}(4),  1543--1561 (2009)

\bibitem{KW2012a}
Klassen, T.Q., Woelfel, P.: Independence of tabulation-based hash classes. In:
  Proc. 10th LATIN 2012. pp. 506--517. LNCS 7256, Springer (2012)

\bibitem{Kutzelnigg10}
Kutzelnigg, R.: A further analysis of cuckoo hashing with a stash and random
  graphs of excess {$r$}. \emph{Discr. Math. and Theoret. Comput. Sci.}  12(3),
   81--102 (2010)

\bibitem{mitzenmacher_vadhan}
Mitzenmacher, M., Vadhan, S.P.: Why simple hash functions work: exploiting the
  entropy in a data stream. In: Proc. 19th Ann. ACM-SIAM Symp. on Discrete
  Algorithms (SODA). pp. 746--755 (2008)

\bibitem{pagh_uniform}
Pagh, A., Pagh, R.: Uniform hashing in constant time and optimal space.
  \emph{SIAM J. Comput.}  \textbf{38}(1),  85--96 (2008)

\bibitem{cuckoo_hashing_pagh}
Pagh, R., Rodler, F.F.: Cuckoo hashing. \emph{J. Algorithms}  \textbf{51}(2),
  122--144 (2004)

\bibitem{patrascu11charhash}
P{\v a}tra{\c s}cu, M., Thorup, M.: The power of simple tabulation hashing. In:
  Proc. 43rd ACM Symp. on Theory of Computing (STOC). pp. 1--10 (2011)

\bibitem{siegel04}
Siegel, A.: On universal classes of extremely random constant-time hash
  functions. \emph{SIAM J. Comput.}  \textbf{33}(3),  505--543 (2004)

\bibitem{ThorupZ04}
Thorup, M., Zhang, Y.: Tabulation based 4-universal hashing with applications
  to second moment estimation. In: Proc. 15th Ann. ACM-SIAM Symp. on Discrete
  Algorithms (SODA). pp. 615--624 (2004)

\bibitem{ThorupZ10}
Thorup, M., Zhang, Y.: Tabulation based 5-universal hashing and linear probing.
  In: Proc. 12th ALENEX. pp. 62--76. SIAM (2010)

\bibitem{WegmanC79}
Wegman, M.N., Carter, L.: New classes and applications of hash functions. In:
  Proc. 20th Ann. Symp. on Foundations of Computer Science (FOCS). pp.
  175--182. IEEE Computer Society (1979)

\bibitem{asymmetric_balanced_allocation}
Woelfel, P.: Asymmetric balanced allocation with simple hash functions. In:
  Proc. 17th ACM-SIAM Symp. on Discrete Algorithms (SODA). pp. 424--433 (2006)

\end{thebibliography}

\appendix

\section{Excess, Stash Size, and Insertions}\label{app:sec:insertions}
In this supplementary section, provided for the convenience of the reader,
we clarify the connection between stash size needed and the excess $\text{ex}(G(S,h_1,h_2))$ 
of the cuckoo graph $G(S,h_1,h_2)$ as well as the role of insertion procedures.
In particular, we prove Lemma~\ref{lemma:stash:excess}.
The central statements of this section 
can also be found in~\cite{stash:journal:09,Kutzelnigg10}.

\subsection{The Excess of a Graph}\label{subsec:excess}
For $G$ a graph, $\zeta(G)$ denotes the number of connected components of $G$.
The cyclomatic number $\gamma(G)$, technically defined as ``the dimension 
of the cycle space of $G$'', can be characterized by the following basic formula~\cite{Diestel}:
\begin{equation}
\gamma(G) = m - n + \zeta(G),
\label{eq:500}
\end{equation}
for $n$ the number of nodes and $m$ the number of edges of $G$.
Note that acyclic graphs are characterized by the equation $n=m+\zeta(G)$ and hence 
by the equation $\gamma(G)=0$.
Using (\ref{eq:500}), two helpful ways of viewing $\gamma(G)$ are easy to prove.

\begin{lemma}\label{lem:cyclo:one}
\begin{itemize}
	\item[\textnormal{(a)}] If we remove edges from $G$ sequentially, in an arbitrary order,
	and the resulting graph is acyclic, then $\gamma(G)$ is the number of cycle edges removed. 
	\item[\textnormal{(b)}] $\gamma(G)$ is the minimum number of edges one has to remove from $G$
                          such that the resulting graph is acyclic. 
	\end{itemize}
\end{lemma}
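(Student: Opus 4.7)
My plan is to derive both parts directly from the formula $\gamma(G) = m - n + \zeta(G)$ in equation~(\ref{eq:500}) by tracking how $\gamma$ changes under a single edge removal. The crux is a simple case analysis: if we remove an edge $e$ from a graph $H$ to obtain $H'$, then the number of nodes is unchanged, $m$ drops by one, and exactly two things can happen to $\zeta$. Either $e$ lies on some cycle of $H$ (a cycle edge), in which case its removal does not disconnect anything, so $\zeta(H')=\zeta(H)$ and hence $\gamma(H')=\gamma(H)-1$; or $e$ is a bridge (non-cycle edge), in which case $\zeta(H')=\zeta(H)+1$ and so $\gamma(H')=\gamma(H)$. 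In either case $\gamma$ decreases by at most one, and it decreases by exactly one iff the removed edge was a cycle edge of the current graph.

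For part (a), I would apply this observation inductively along the sequence of removals. Let $G=G_0,G_1,\ldots,G_r$ be the successive graphs. At each step $\gamma(G_{i+1})=\gamma(G_i)-1$ if the $i$-th removed edge was a cycle edge in $G_i$, and $\gamma(G_{i+1})=\gamma(G_i)$ otherwise. Since $G_r$ is acyclic, Lemma's prefix remark gives $\gamma(G_r)=0$, and telescoping yields that the total number of cycle-edge removals equals $\gamma(G_0)=\gamma(G)$.

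Part (b) follows in two directions. For the upper bound, I would exhibit a concrete sequence: repeatedly pick any cycle in the current graph and delete one of its edges, stopping when the graph is acyclic. By part (a), the number of edges removed is exactly $\gamma(G)$. For the lower bound, consider any sequence of deletions making $G$ acyclic; each step decreases $\gamma$ by at most one (by the case analysis above), and we must move from $\gamma(G)$ to $0$, so at least $\gamma(G)$ deletions are necessary.

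I do not anticipate a real obstacle here: the argument is essentially bookkeeping around~(\ref{eq:500}). The only point requiring a small care is the dichotomy "cycle edge versus bridge" and its effect on $\zeta$, which is the standard characterization of a bridge as an edge whose removal increases the number of connected components; this is where I would be careful to justify that no other case arises.
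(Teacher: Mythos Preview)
Your proposal is correct and matches the paper's own proof essentially line for line: both derive the single-edge case analysis from~(\ref{eq:500}) (cycle edge drops $\gamma$ by one, bridge leaves it unchanged), telescope for part~(a), and for part~(b) combine the ``at most one per step'' lower bound with the greedy cycle-edge removal upper bound. Your write-up is in fact slightly more explicit in separating the two directions of~(b), but there is no substantive difference.
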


\begin{proof}
Assume a subgraph $G'$ of $G$ (with all $n$ nodes) has $m'>0$ edges.
If we remove one edge $e'$ from $G'$ to obtain $G''$, we have, using (\ref{eq:500}) twice:
$$
\gamma(G'') = (m'-1) - n + \zeta(G'') = \gamma(G') - (1 - (\zeta(G'')-\zeta(G'))). 
$$
We observe: 
\begin{itemize}
	\item If $e'$ is a cycle edge in $G'$, then $\zeta(G'')=\zeta(G')$, and hence $\gamma(G'') =\gamma(G')-1$. 
\item If $e'$ is not a cycle edge, then $\zeta(G'') = \zeta(G') + 1$, and hence $\gamma(G'') = \gamma(G')$. 
\end{itemize}
We prove (a): By what we just observed,
to reduce the cyclomatic number from $\gamma(G)$ to 0 
the number of rounds in which a cycle edge is removed must be $\gamma(G)$.
Now we prove (b): Think of the edges as being removed sequentially. Again, by our observation, 
in order to reduce the cyclomatic number from $\gamma(G)$ to 0 by removing as few edges as possible
we should never remove an edge that is not on a cycle. In this way we remove exactly $\gamma(G)$ (cycle) edges. 
\qed
\end{proof}
We have defined the excess $\text{ex}(G)$ of a graph $G$ as the minimum number of edges 
one has to remove from $G$ so that the remaining subgraph has only acyclic
and unicyclic components. In \cite{Kutzelnigg10} the characterization of this quantity 
given next was used as a definition; the same idea was used in~\cite{stash:journal:09} (without giving it a name).  

For $G$ a graph, let $\zeta_{\text{cyc}}(G)$ denote the number of cyclic components of $G$.

\begin{lemma}\label{lem:excess:one}
In all graphs $G$ the equation
\emph{$\text{ex}(G) = \gamma(G) - \zeta_{\text{cyc}}(G)$} is satisfied.
\end{lemma}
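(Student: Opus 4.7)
My plan is to reduce to a per-component statement and then use the counting formula (\ref{eq:500}) on each component.

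First I would observe that excess is additive across the connected components of $G$: any candidate edge set $E'$ witnessing the excess can be partitioned according to which component of $G$ each removed edge belongs to, and conversely, component-wise removals combine to a valid removal for $G$. Since $\gamma$ and $\zeta_{\text{cyc}}$ are also additive over components, it suffices to prove $\text{ex}(C) = \gamma(C) - \zeta_{\text{cyc}}(C)$ for every connected component $C$.

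If $C$ is acyclic or unicyclic, both sides are $0$ and there is nothing to prove. So assume $C$ is cyclic with $\gamma(C) = k \ge 1$ (hence $\zeta_{\text{cyc}}(C) = 1$); the goal becomes $\text{ex}(C) = k - 1$. For the upper bound I would fix a spanning tree $T$ of $C$; the $k$ non-tree edges are exactly the cycle-creating edges. Removing any $k-1$ of them leaves $T$ plus a single extra edge, which is a connected unicyclic graph, witnessing $\text{ex}(C) \le k - 1$.

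The key step is the matching lower bound, and this is where (\ref{eq:500}) does the work. Suppose we remove $r$ edges from $C$ and the resulting graph has components $C_1,\ldots,C_p$, each acyclic or unicyclic, i.e.\ $\gamma(C_i) \le 1$. Applying (\ref{eq:500}) to $C$ and to the graph after removal on the same vertex set gives
\[
\sum_{i=1}^{p} \gamma(C_i) \;=\; (|E(C)| - r) - |V(C)| + p \;=\; \gamma(C) - 1 + p - r.
\]
Since $\sum_i \gamma(C_i) \le p$, this forces $r \ge \gamma(C) - 1 = k - 1$, so $\text{ex}(C) \ge k - 1$. Summing the per-component identity over all components of $G$ yields $\text{ex}(G) = \gamma(G) - \zeta_{\text{cyc}}(G)$.

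The only subtle point — and the main potential obstacle — is making sure the per-component reduction is truly valid: that one cannot do better globally than the sum of component-wise optima. This follows because the components of $G$ are vertex-disjoint, so every removed edge lies in exactly one component of $G$, and the connected components of the resulting graph refine those of $G$; thus the minimum over all removal schemes equals the sum of the component-wise minima. Everything else is straightforward bookkeeping with Euler's formula (\ref{eq:500}).
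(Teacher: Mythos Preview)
Your proof is correct, and it takes a genuinely different route from the paper's own argument. The paper proves both inequalities globally, without first decomposing into components: for ``$\le$'' it removes cycle edges one by one until every cyclic component is unicyclic and then invokes Lemma~\ref{lem:cyclo:one}(a) to count; for ``$\ge$'' it analyzes an optimal removal set $E^+$, splits it into $\beta$ cycle-edge removals and $\ex(G)-\beta$ non-cycle-edge removals, and tracks how the number of cyclic components can grow under the latter. Your argument instead reduces to connected components up front and then handles each component with a single application of~(\ref{eq:500}) to the post-removal graph; the inequality $\sum_i \gamma(C_i)\le p$ immediately yields the lower bound. This is cleaner for the lower bound and avoids the sequential-removal lemma entirely, at the small cost of having to justify the component-wise additivity of $\ex$ (which you do). One cosmetic point: after disposing of the acyclic and unicyclic cases you write ``assume $C$ is cyclic with $\gamma(C)=k\ge1$'', which overlaps with the unicyclic case you just handled; writing $k\ge2$ there would be tidier, though the argument is valid for $k=1$ as well.
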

\begin{proof} Assume $G$ has $n$ nodes and $m$ edges.\\
``$\le$'': Starting with $G$, we iteratively remove \emph{cycle}
edges until each cyclic component has only one cycle left.
The number of edges removed is at least $\text{ex}(G)$. 
Call the resulting graph $G'$.
Removing one cycle edge from each of the $\zeta_{\text{cyc}}(G)$ cyclic components of $G'$
will yield an acyclic graph. 
Lemma~\ref{lem:cyclo:one}(a) tells us that together exactly $\gamma(G)$ edges have been removed;
hence $\gamma(G) \ge \text{ex}(G) + \zeta_{\text{cyc}}(G)$.\\
``$\ge$'': 
Choose a set $E^+$ of $\text{ex}(G)$ edges in $G$ such that removing these edges leaves
a graph $G'$ with only acyclic and unicyclic components.
Now imagine that the edges in $E^+$ are removed one by one in an arbitrary order.
Let $\beta$ denote the number of edges in $E^+$ that are on a cycle when removed;
the other $\text{ex}(G)-\beta$ many were non-cycle edges when removed.
Removing one cycle edge from each cyclic component of $G'$ 
will leave an acyclic graph.
Counting the number of cycle edges we removed altogether, and applying 
Lemma~\ref{lem:cyclo:one}(a) again, we see that 
$\gamma(G) = \beta + \zeta_{\text{cyc}}(G')$.
Since removing a non-cycle edge from a graph can increase the number of cyclic components by at most 1, 
we have that $\zeta_{\text{cyc}}(G') \le \zeta_{\text{cyc}}(G) + (\text{ex}(G)-\beta)$.
Combining the inequalities yields $\gamma(G) \le \zeta_{\text{cyc}}(G) + \text{ex}(G)$.  
\qed
\end{proof}

\subsection{The Excess of the Cuckoo Graph and the Stash Size}\label{subsec:stash:excess}
The purpose of this section is to prove Lemma~\ref{lemma:stash:excess},
which we recall here.
We assume that $h_1$ and $h_2$ are given,
and write $G(S)$ for $G(S,h_1,h_2)$, for $S\subseteq U$.
\newcounter{tmp}
\setcounter{tmp}{\arabic{lemma}}
\setcounter{lemma}{0}
\begin{lemma}[\cite{stash:journal:09}]\label{lemma:stash:excess:repeated}
The keys from $S$ can be stored in the two tables and a stash of size $s$ using $(h_1,h_2)$ if and only
if \emph{$\text{ex}(G(S))\le s$}. 
\end{lemma}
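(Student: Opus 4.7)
The plan is to reduce both directions to the classical Pagh--Rodler orientation criterion: a key set $S'$ can be accommodated in the two tables using $(h_1,h_2)$ (without any stash) if and only if every connected component of $G(S',h_1,h_2)$ has at most as many edges as vertices, equivalently every component is acyclic or unicyclic, equivalently $\text{ex}(G(S'))=0$. So first I would restate this orientation fact as a stand-alone preliminary and give a short proof: a placement of $S'$ into two tables corresponds to orienting each edge $(h_1(x),h_2(x))$ toward the table slot holding $x$, so a valid placement exists iff $G(S')$ admits an orientation in which every vertex has in-degree at most $1$. The ``only if'' is a simple in-degree count per component; the ``if'' direction is constructive (in a tree orient away from a root; in a unicyclic component orient the unique cycle cyclically and orient the pendant trees away from the cycle).

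For the direction $\Leftarrow$, I would assume $\text{ex}(G(S))\le s$ and, by the definition of excess, fix a set $E^+$ of at most $s$ edges whose removal leaves a graph $G'$ in which every component is acyclic or unicyclic. Put the (at most $s$) keys corresponding to the edges in $E^+$ into the stash, let $S'\subseteq S$ collect the remaining keys, and observe $G(S',h_1,h_2)=G'$. By the orientation fact, $S'$ can be placed in the two tables using $(h_1,h_2)$, yielding a valid placement with stash size at most $s$.

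For the direction $\Rightarrow$, I would start from any valid placement that uses a stash holding a set $S_{\text{stash}}\subseteq S$ with $|S_{\text{stash}}|\le s$. The keys in $S':=S\setminus S_{\text{stash}}$ are placed in the two tables, so by the orientation fact every component of $G(S',h_1,h_2)$ is acyclic or unicyclic, i.e., $\text{ex}(G(S'))=0$. But $G(S',h_1,h_2)$ is obtained from $G(S,h_1,h_2)$ by deleting the at most $s$ edges associated with the stashed keys, so by the very definition of excess (the minimum number of edges whose removal yields a graph with only acyclic and unicyclic components) we get $\text{ex}(G(S,h_1,h_2))\le|S_{\text{stash}}|\le s$.

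The one mildly delicate ingredient is the orientation fact, because it is the place where the combinatorial content actually lives; everything else is bookkeeping around the definition of $\text{ex}$. There is nothing subtle about cycles-versus-stash accounting since $\text{ex}$ has been defined exactly to match the number of edges one has to ``peel off'' to reach a graph that is orientable in the cuckoo sense, so once the orientation lemma is in hand both implications fall out by a single edge-removal step.
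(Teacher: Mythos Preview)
Your proposal is correct and follows essentially the same approach as the paper's own proof: isolate the base case (a key set with no stash can be placed iff every component of its cuckoo graph is acyclic or unicyclic), then for each direction remove the at most $s$ stash keys/edges and invoke this base case together with the definition of $\text{ex}$. The paper phrases the base case via a node/edge count using~(\ref{eq:500}) and a leaf-peeling placement (citing~\cite{devroye}), while you phrase it as an in-degree-$1$ orientation argument; these are the same fact in different clothing.
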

\setcounter{lemma}{\arabic{tmp}}
\begin{proof}
``$\Rightarrow$'': Assume $T$ is a subset of $S$ of size at most $s$
such that all keys from $S'=S-T$ can be stored in the two tables.
Then all components of $G(S')$ must be acyclic or unicyclic.
(Assume $C$ is a component with $\gamma(C)>1$.
Then by (\ref{eq:500}) the number of edges (keys) in $C$  
would be strictly larger than the number of nodes (table positions), 
which is impossible.)
Since $G(S')$ is obtained from $G(S)$ 
by removing the edges $(h_1(x),h_2(x))$, $x\in T$, 
we get $\text{ex}(G(S))\le s$.\\
``$\Leftarrow$'': Assume $\text{ex}(G(S))\le s$. 
Choose a subset $T$ of $S$ of size $\text{ex}(G(S))$
such that $G(S-T)$ has only acyclic and unicyclic components. 
From what is known about the behaviour of standard cuckoo hashing, 
we can store $S'=S-T$ in the two tables using $h_1$ and $h_2$ (e.\,g., see~\cite[Sect.~4]{devroye}).
(This can even be proved directly. If one of the nodes touched by an edge $(h_1(x),h_2(x))$, $x\in S'$,
has degree 1, we place $x$ in the corresponding cell. Iterating this,
we can place all keys excepting those that belong to cycle edges. 
Since $G(S')$ has only acyclic and unicyclic components,
the cycle edges form isolated simple cycles, and clearly the keys that belong to 
such a cycle can be placed in the corresponding cells.)
By assumption, the keys from $T$ fit into the stash.
\qed 
\end{proof}

\subsection{The Insertion Procedure}\label{subsec:insert}
We consider here the obvious generalization of the insertion procedure 
in standard cuckoo hashing~\cite{cuckoo_hashing_pagh}.
It assumes that a procedure \textit{rehash} is given that will choose
two new hash functions and insert all keys anew.
The parameter \textit{maxloop} is used for avoiding infinite loops.
(When using cuckoo hashing with a stash of size $s$, one will choose $\textit{maxloop}=\rmTheta((s+2) \log n)$.
For analysis purposes, larger values of \emph{maxloop} are considered as well.)
Empty table cells contain \textbf{nil}. The operation \textit{swap}
exchanges the contents of two variables. 
\begin{algorithm}[Insertion in a cuckoo table with a stash]\samepage\label{algo:insert}
\begin{tabbing}
		\qquad\=\quad\=\quad\=\quad\=\quad\=\quad\=\quad\=\quad\=\quad\=\kill\\[-3.6ex]
		\textbf{procedure} \textit{stashInsert}($x$: \texttt{key})\\
		{\small(1)} \>\> $\texttt{nestless} := x$;\\
		{\small(2)} \>\> $\texttt{i} := 1$;\\
		{\small(3)} \>\> \textbf{repeat} \textit{maxloop} \textbf{times}\\
		{\small(4)} \>\>\>\> \textit{swap}$(\texttt{nestless},T_{\texttt{i}}[h_{\texttt{i}}(\texttt{nestless})])$;\\ 
		{\small(5)} \>\>\>\> \textbf{if} $\texttt{nestless}=\textbf{nil}$ \textbf{then} \textbf{return};\\
		{\small(6)} \>\>\>\> $\texttt{i} := 3-\texttt{i}$;\\ 
		{\small(7)} \>\> \textbf{if} stash is not yet full\\
		{\small(8)} \>\>\>\> \textbf{then} add \texttt{nestless} to stash\\
		{\small(9)} \>\>\>\> \textbf{else} \textit{rehash}.
\end{tabbing}
\end{algorithm}
As long as it is not finished, the procedure maintains a ``nestless'' key (in \texttt{nestless})
and the current index $i\in\{1,2\}$ (in \texttt{i}) of the table where this key is to be placed.
When a new key $x$ is to be inserted, it is declared ``nestless'' and $i$ is set to 1. 
As long as there is a nestless key $x$, but at most
for \textit{maxloop} rounds, the following is iterated: 
Assume $x$ is nestless and the current index is $i$. 
Then $x$ is placed in position $h_i(x)$ in table $T_i$.
If this position is empty, the procedure terminates;
if it contains a key $x'$, 
that key gets evicted to make room for $x$,
is declared nestless, and $i$ is changed to the other value $3-i$. 
If the loop does not terminate within \textit{maxloop} rounds, the
key that is currently nestless gets stored in the stash.
If this causes the stash to overflow, a \emph{rehash} is carried out. 
(This may be realized by 
collecting all keys from tables and stash as well as the nestless key,
choosing a new pair $(h_1,h_2)$ of hash functions,  
and calling the insertion procedure for all keys.)

\subsection{Complete Insertion Loops and the Excess}
We first look at the behavior of certain variants of the insertion procedure
(called ``complete''), 
which exhibit the following behavior when $x$ is inserted:
(i) if with \emph{maxloop} set to infinity the loop were to run forever,
then this is noticed and at some point the currently nestless key is put in the stash; 
(ii) otherwise the loop is left to run until the nestless key is stored in
an empty cell.
It is not hard to see (\emph{cf.}~\cite{devroye}) that one obtains a complete variant from Algorithm~\ref{algo:insert} 
if one chooses \textit{maxloop} as some number larger than $2|S|+3$.

\begin{proposition}[\cite{stash:journal:09,Kutzelnigg10}]\label{prop:complete:insertion}
If inserting the keys of $S$ by some complete insertion procedure places $s$ keys in the stash, 
then \emph{$s=\text{ex}(G(S))=\text{ex}(G(S,h_1,h_2))$}. 
\end{proposition}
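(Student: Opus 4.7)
I would prove the two inequalities $\text{ex}(G(S))\le s$ and $\text{ex}(G(S))\ge s$ separately. The first is immediate from Lemma~\ref{lemma:stash:excess:repeated}, since the completed insertion procedure stores $S$ using $(h_1,h_2)$ and a stash of size exactly $s$. For the reverse inequality, my plan is to induct over the insertion process, maintaining the invariant that after processing the first $i$ keys $S_i=\{x_1,\ldots,x_i\}$, the stash contents $T_i$ satisfies $|T_i|=\text{ex}(G(S_i))$. The base case $i=0$ is trivial, and the inductive step is driven by two general facts about excess, both easy consequences of Lemma~\ref{lem:excess:one}.

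The first fact is monotonicity: $G\subseteq G'$ implies $\text{ex}(G)\le\text{ex}(G')$ (a set of edges witnessing $\text{ex}(G')\le k$, restricted to $G$, witnesses $\text{ex}(G)\le k$, since subgraphs of tree/unicyclic components have only tree/unicyclic components). The second is a sharp edge-addition identity: adding one edge to a graph changes the excess by exactly $0$ or $1$, and the change is $+1$ precisely when both endpoints of the new edge lie in cyclic components (so the edge either merges two cyclic components or creates an extra cycle inside one). This is proved by a short case analysis of how $\gamma$ and $\zeta_{\text{cyc}}$ evolve under edge addition. Together these facts pin $\text{ex}(G(S_{i+1}))\in\{|T_i|,|T_i|+1\}$, and the work of each inductive step reduces to identifying which of the two values occurs.

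If $x_{i+1}$ is placed into the tables without stashing (case (a)), then the non-stashed keys $(S_i-T_i)\cup\{x_{i+1}\}$ form a valid cuckoo placement, so by Lemma~\ref{lemma:stash:excess:repeated} the excess of their cuckoo graph is $0$; reinserting the $|T_i|$ edges for the stashed keys can raise the excess to at most $|T_i|$, while monotonicity against $G(S_i)$ gives the matching lower bound, forcing $\text{ex}(G(S_{i+1}))=|T_i|=|T_{i+1}|$. If instead some key is pushed onto the stash (case (b)), then $|T_{i+1}|=|T_i|+1$ and I must show that the $x_{i+1}$-edge raises the excess of $G(S_i)$. Here I exploit completeness of the procedure: its failure to store $(S_i-T_i)\cup\{x_{i+1}\}$ in the two tables implies, again by Lemma~\ref{lemma:stash:excess:repeated}, that $\text{ex}(G((S_i-T_i)\cup\{x_{i+1}\}))\ge 1$. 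Since $G(S_i-T_i)$ consists only of tree and unicyclic components, the edge-addition identity then forces both endpoints of the $x_{i+1}$-edge to sit in unicyclic (hence cyclic) components of $G(S_i-T_i)$.

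The main obstacle is the final step: lifting this positional information from $G(S_i-T_i)$ to the larger graph $G(S_i)$, whose component structure has been altered by the edges corresponding to $T_i$. The decisive observation is that cyclic components are preserved under edge addition---any cycle present in a subgraph remains present in every supergraph---so both endpoints of the $x_{i+1}$-edge lie in (possibly equal) cyclic components of $G(S_i)$ as well. A second application of the edge-addition identity then yields $\text{ex}(G(S_{i+1}))=\text{ex}(G(S_i))+1=|T_i|+1=|T_{i+1}|$, closing the induction and hence the proof.
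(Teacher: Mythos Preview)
Your proposal is correct and follows essentially the same route as the paper: one inequality comes directly from Lemma~\ref{lemma:stash:excess:repeated}, and the other is proved by induction over the insertion sequence with the same two-case split, the key step in the stashing case being that both endpoints of the new edge lie in cyclic components of $G(S_i-T_i)$ and hence of the supergraph $G(S_i)$, whence Lemma~\ref{lem:excess:one} forces the excess to increase. The only cosmetic differences are that you maintain the stronger invariant $|T_i|=\text{ex}(G(S_i))$ (the paper keeps only $|T_i|\le\text{ex}(G(S_i))$) and that you package the effect of a single edge addition on $\gamma-\zeta_{\text{cyc}}$ as a standalone identity, whereas the paper does that case analysis inline.
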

\begin{proof}
``$\ge$'': After the insertion is complete, all keys from $S$ are stored in the two tables and the stash.
Lemma~\ref{lemma:stash:excess:repeated} implies that $s\ge\text{ex}(G(S))$.\\
``$\le$'': For this, we use induction on the size of $S$. If $S=\emptyset$,
excess and stash size are both 0. Now assume as induction hypothesis that set $S$ has been inserted,
that the set of keys placed in the stash is $T$, and that $|T|=s\le\text{ex}(G(S))$.
Let $S'=S-T$.   
We insert a new key $y$ from $U-S$.\\ 
\emph{Case} 1: The insertion procedure finds that $y$ can be accommodated
without using the stash.---The stash size remains $s$, and  
$s \le \text{ex}(G(S))\le \text{ex}(G(S\cup\{y\}))$.\\
\emph{Case} 2: The complete insertion procedure notices that the loop were to run forever
 and places some key in the stash.---By the properties of the complete insertion loop for standard cuckoo hashing 
 as explored in~\cite{devroye} we know that $G(S'\cup\{y\})$ must contain a connected component
 that is neither acyclic nor unicyclic. 
 Since $\text{ex}(G(S'))=0$, it must be edge $(h_1(x),h_2(x))$ that makes the difference.
This means that each endpoint of $(h_1(x),h_2(x))$ lies in some cyclic component of $\text{ex}(G(S'))$.
Now $G(S')$ is a subgraph of $G(S)$, 
so the same is true in $G(S)$.
Recall Lemma~\ref{lem:excess:one}, and consider two cases when changing from $S$ to $S\cup\{y\}$:
If the endpoints of $(h_1(x),h_2(x))$ lie in two different cyclic components of $G(S)$,
then the number of cyclic components decreases by 1, hence the excess increases by 1;
if they lie in one and the same cyclic component, 
then the cyclomatic number increases by 1, and the excess increases by 1 as well. 
In both cases we get that $s+1 \le \text{ex}(G(S)) +1  = \text{ex}(G(S\cup\{y\}))$. 
\qed
\end{proof}

\subsection{Standard Insertion}
It turns out that by choosing $\textit{maxloop}=\rmTheta((s+2)\log n)$ in Algorithm~\ref{algo:insert} 
we can make sure that with probability of $O(1/n^{s+1})$ no rehash is necessary.%
\footnote{If deletions are allowed, before calling \emph{rehash}
one should try whether any one of the $s$ keys presently stored in the stash
can be inserted into the tables by the insertion procedure. We ignore deletions here.} 
Note that if the stash has size $0$, then Algorithm~\ref{algo:insert} is
exactly the insertion procedure of standard cuckoo hashing
from~\cite{cuckoo_hashing_pagh}.
The following claim and proof are similar 
to what has to be done in the analysis of standard cuckoo hashing. 

\begin{proposition}\label{app:prop:random}
Assume the hash functions $(h_1^*,h_2^*)$ are fully random, and 
the keys from $S$ are inserted sequentially into a cuckoo table with a stash of size $s$,
using Algorithm~\ref{algo:insert}.
If we choose $\textit{maxloop}=\alpha (s+2)\log n$ for a suitable constant $\alpha>0$, 
then we have\emph{:}\emph{
$$
\Pr(\text{the stash of size $s$ overflows})=O(1/n^{s+1}).
$$
}\end{proposition}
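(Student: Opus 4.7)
My plan is to decompose the overflow event into a ``structural'' part and a ``timeout'' part. The structural failure is that the final cuckoo graph already requires more than $s$ stash slots; the timeout failure is that Algorithm~\ref{algo:insert} cuts some insertion's \texttt{repeat} loop short when a complete variant would still have finished it. Formally, let $E_1$ denote the event $\text{ex}(G(S,h_1^*,h_2^*)) \ge s+1$ and let $E_2$ denote the event that during the insertion of some key the loop reaches \textit{maxloop} iterations. On $\overline{E_1}\cap\overline{E_2}$, Algorithm~\ref{algo:insert} reproduces, step for step, the behavior of some complete insertion procedure, which by Proposition~\ref{prop:complete:insertion} places exactly $\text{ex}(G(S))\le s$ keys in the stash; so no overflow occurs. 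Hence $\Pr(\text{overflow}) \le \Pr(E_1) + \Pr(E_2)$.

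For the structural part, Lemma~\ref{lem:stash:fully:random} (which, as the authors note, works already in the fully random case) gives $\Pr(E_1) = O(1/n^{s+1})$ directly. For the timeout part I would use the classical cuckoo-walk counting argument. If the walk triggered by the $t$-th key survives $L$ iterations, it produces a sequence $y_0 = x_t, y_1,\dots,y_{L-1}$ of pairwise distinct keys together with an alternating sequence of sides $i_j \in \{1,2\}$, and forces the $L-1$ hash-value coincidences $h_{i_j}(y_j) = h_{i_j}(y_{j-1})$. Under full randomness of $(h_1^*,h_2^*)$ these coincidences are mutually independent and each has probability $1/m$. Summing over the $n$ choices for the starting key and at most $n^{L-1}$ continuations yields
$$\Pr(E_2) \le n\cdot n^{L-1}\cdot m^{-(L-1)} = n\cdot(1+\varepsilon)^{-(L-1)}.$$
Picking $\alpha > 2/\log(1+\varepsilon)$ and $L = \textit{maxloop} = \alpha(s+2)\log n$ then forces $\Pr(E_2) = O(1/n^{s+1})$, completing the bound.

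The main obstacle will be handling walks that revisit an edge, i.e., walks that ``fold back'' into a cycle of the partial cuckoo graph: there the naive forward counting is misleading because fewer hash-value coincidences are actually forced. The remedy I have in mind is to bound such cyclic walks by the number of small ``bicycle'' subgraphs in $G(S,h_1^*,h_2^*)$ that could host them; by the same type of graph enumeration used in Lemmas~\ref{lem:num_graphs} and~\ref{lem:fully:random}, these contribute only a lower-order term and can be absorbed into the constant of the $O(\cdot)$. Putting the two estimates together gives $\Pr(\text{stash overflows}) = O(1/n^{s+1})$, as required.
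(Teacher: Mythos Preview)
Your decomposition has a genuine gap: the event $E_2$ --- that \emph{some} insertion reaches \textit{maxloop} --- has probability $\Omega(1/n)$, not $O(1/n^{s+1})$, so for $s\ge1$ the inequality $\Pr(\text{overflow}) \le \Pr(E_1) + \Pr(E_2)$ is too weak to conclude anything. Reaching \textit{maxloop} is exactly the mechanism by which a key enters the stash, so every \emph{legitimate} stash placement already lies in $E_2$. In particular, whenever $\ex(G(S))\ge 1$ not all keys fit in the two tables, hence at least one walk must hit \textit{maxloop}, hence $E_2$ occurs; and $\Pr(\ex(G(S))\ge 1)=\Theta(1/n)$. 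Your bicycle remedy therefore cannot help: the presence of a single bicycle already costs $\Theta(1/n)$, which for $s\ge1$ is not a lower-order correction to be ``absorbed into the constant'' but the dominant term. (Note, incidentally, that $\{\text{overflow}\}\subseteq E_2$ outright, so $E_1$ is doing no work in your split as stated.)

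The paper's argument refines the split: it charges to the timeout part only those insertions $x_j$ for which Algorithm~\ref{algo:insert} stashes a key although $\ex(G(S_j))=\ex(G(S_{j-1}))$, i.e., the \emph{spurious} stashings where the walk would have terminated had it been allowed to continue. The remaining \textit{maxloop}-hits are the legitimate ones (excess increases); there are at most $\ex(G(S))$ of those, and they overflow the stash only on $E_1$. For a spurious hit the walk is finite of length $p=\textit{maxloop}$, and the Pagh--Rodler structural lemma cited in the paper then guarantees that $G(S_j)$ contains a simple path on $\lceil p/3\rceil$ distinct nodes starting at $h_i(x_j)$ for some $i\in\{1,2\}$; the factor $1/3$ is exactly what absorbs a single fold-back through a cycle, with no bicycle enumeration needed. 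Counting such paths gives probability at most $2/(1+\varepsilon)^{\lceil p/3\rceil}$ per insertion, so with $\textit{maxloop}=3(s+2)\log_{1+\varepsilon}n$ this is below $1/n^{s+2}$; summing over the $n$ insertions yields $O(1/n^{s+1})$.
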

\emph{Sketch of proof.} Theorem~\ref{thm:r_excess_prob_zz}
tells us that the probability that a stash of size $s$ is not sufficient 
because the excess of $G(S,h_1^*,h_2^*)$ is too large is $O(1/n^{s+1})$.
All we have to show is that the probability is also this small that an extra key slips into the
stash because the insertion loop was stopped by the step counter hitting \emph{maxloop}. 
Let $S=\{x_1,\ldots,x_n\}$, with the keys listed in the order in
which they are inserted, and let $S_j=\{x_1,\ldots,x_j\}$.  
For $1\le j \le n$ and some bound $p$ consider the event that the insertion procedure for $x_j$
needs $p$ or more rounds. 
One can show (this was done in~\cite{cuckoo_hashing_pagh} with a different terminology)
that then $G(S_j)$ must contain a path $u_0,u_1,\ldots,u_t,u_{t+1}$, with $t=\lceil p/3\rceil$,
where $u_0$ is the node corresponding to $T_i[h_i(x_j)]$, for $i=1$ or $i=2$,
and $u_0,\ldots,u_t$ are distinct nodes. 
Viewing the situation in terms of edges this means that 
there must be some $T\subseteq S_{j-1}$ of size $t=\lceil p/3\rceil$
such that the edges $(h_1(x),h_2(x))$, with $x\in T$, in some order, form such a path.
The latter event we call $\mathcal{A}_{T}$. 
We have $\Pr(\mathcal{A}_{T}) \le t! \cdot 2/m^t$.
The number of sets $T$ to consider is $\binom{j-1}{t} < \binom{n}{t}$.
Thus, 
\begin{equation}
\Pr(\exists T\subseteq S_{j-1}\colon |T|=t\wedge \mathcal{A}_{T}) \le \binom{n}{t}t!\cdot \frac{2}{m^t} < \frac{2}{(1+\varepsilon)^t}.
\label{eq:10001}
\end{equation}
If the insertion of $x_j$ increases the stash size although $\text{ex}(G(S_j))=\text{ex}(G(S_{j-1}))$,
then this insertion must make $p=\textit{maxloop}$ steps. 
By (\ref{eq:10001}), the probability of this to happen is smaller than $2/{(1+\varepsilon)^{\lceil \textit{maxloop}/3\rceil}}$.
So, if we choose $\textit{maxloop} = 3(s+2)\log_{1+\varepsilon} n$ (i.\,e., $\alpha=3/\log(1+\varepsilon)=\rmTheta(1/\varepsilon)$), this 
probability will be smaller than $1/n^{s+2}$. 
Summing over all $j$ we obtain the bound $O(1/n^{s+1})$.\qed

\begin{proposition}\label{app:prop:insertion}
Assume the hash functions $(h_1^*,h_2^*)$ are fully random, the
keys from $S$ are stored in a cuckoo table with a stash of size $s$,
and a new key $y$ is inserted by Algorithm~\ref{algo:insert},
with $\textit{maxloop}\ge\alpha (s+2)\log n$ 
for $\alpha$ as in Proposition~\ref{app:prop:random}. 
Then the expected number of steps needed for this insertion is $O(1)$. 
\end{proposition}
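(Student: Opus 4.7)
My plan is to express the expected number of loop iterations $E[X]$ as the tail sum $\sum_{p \ge 1} \Pr(X \ge p)$ and obtain a geometrically decaying bound on each tail probability by re-using the path argument already packaged in the proof of Proposition~\ref{app:prop:random}. Since \emph{maxloop} is an almost sure upper bound on $X$ and the geometric rate is $1+\varepsilon > 1$, the resulting sum telescopes to a constant depending only on $\varepsilon$.

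For the per-tail bound I would appeal directly to the Pagh--Rodler structural observation recalled just below equation~(\ref{eq:10001}): if the repeat loop for $y$ runs for $p$ or more iterations, then $G(S\cup\{y\},h_1^*,h_2^*)$ must contain a simple path $u_0,u_1,\dots,u_t,u_{t+1}$ of length $t=\lceil p/3\rceil$ starting at $u_0 = h_i(y)$ for some $i\in\{1,2\}$, whose $t$ edges are labelled by a size-$t$ subset of $S$ in some order. Under full randomness of $(h_1^*,h_2^*)$ a fixed ordered sequence of $t$ keys traces such a path with probability $m^{-t}$, and the union bound over the root choice, the $\binom{n}{t}$ subsets, and the $t!$ orderings reproduces exactly the bound $\binom{n}{t} t!\cdot 2/m^t < 2/(1+\varepsilon)^t$ derived in (\ref{eq:10001}); hence $\Pr(X \ge p)\le 2/(1+\varepsilon)^{\lceil p/3\rceil}$.

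Summing yields $E[X] \le \sum_{p\ge 1} \min\bigl(1,\,2/(1+\varepsilon)^{\lceil p/3\rceil}\bigr) = O(1/\varepsilon) = O(1)$, using that $\varepsilon$ is treated as constant. The only subtlety to watch is that the path argument applies verbatim despite the presence of a stash: the walk initiated at $y$ only visits keys currently resident in the two tables, but the subgraph spanned by those keys is a subgraph of $G(S\cup\{y\})$, so the forced simple path still lies inside the full graph and the union-bound estimate transfers without modification. This is the main (and only) point where one has to be careful; everything else is a routine geometric-series calculation.
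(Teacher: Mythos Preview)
Your proposal is correct and matches the paper's own argument essentially line for line: both express the expected number of rounds as the tail sum $\sum_{p\ge1}\Pr(\text{at least }p\text{ rounds})$, invoke the path bound~(\ref{eq:10001}) to get $\Pr(X\ge p)\le 2/(1+\varepsilon)^{\lceil p/3\rceil}$, and sum the resulting geometric series to $O(1)$. The paper additionally remarks that the cost of a possible \emph{rehash} is ignored (its contribution being $O(1/n^s)$), a point you handle implicitly by bounding only the loop-iteration count; your explicit observation that the stash keys do not interfere with the path argument is a useful clarification the paper leaves tacit.
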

\emph{Sketch of proof.} 
Let the random variable $Z$ denote the number of rounds needed for this insertion.
We ignore the cost of a \emph{rehash}.
(The contribution of this rare event to the overall insertion time is $O(1/n^s)$.
A discussion of the case $s=0$ can be found in~\cite{DW2003a}.)
Then $\E(Z)\le \sum_{p\ge 1}\Pr(\text{at least $p$ rounds are needed to store $y$})$, and hence, 
using (\ref{eq:10001}) and arguing as in~\cite{cuckoo_hashing_pagh}, 
\begin{equation*}
\E(Z)\le \sum_{p\ge 1}\Pr(\exists T\colon |T| = \lceil p/3\rceil \wedge \mathcal{A}_T) \le 
\sum_{p\ge 1} \frac{2}{(1+\varepsilon)^{\lceil p/3\rceil}} = O(1).
\tag*{\qed}
\end{equation*}
Of course, the last two propositions are formulated for fully random hash functions. 
Using the techniques developed for the proof of Theorem~\ref{thm:r_excess_prob_zz}
one can show that they are valid for hash functions from $\ZZ^{2k,c}_{\ell,m}$ as well,
for the parameter choices as in that theorem. 
The only difference is that instead of the graph property $\LL$ one has to use the 
graph property ``connected'' in a way explored in detail in~\cite{DW2003a}.

\section{Proof of a Claim}\label{app:claim:proof}
We prove the following claim, stated and used in the proof of Lemma~\ref{lem:02}.
\begin{claim}
    Every leaf\/less connected graph with $i$ marked edges has a leaf\/less connected subgraph with 
    cyclomatic number $\le i {+} 1$ that contains all marked edges. 
\end{claim}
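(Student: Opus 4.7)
The plan is to build the desired subgraph $H$ incrementally: start from a single cycle of the given leafless connected graph $G$, then add the $i$ marked edges one at a time, arranging that each step raises the cyclomatic number by at most $1$. Since $G$ is leafless and connected, if it has any edges it has minimum degree at least $2$ and thus contains a cycle; take any such cycle as $H_0$, which is leafless, connected, and has $\gamma(H_0)=1$. (If $G$ is edgeless then $i=0$ and the empty subgraph is a valid output.)

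For the iteration, maintain the invariant that after $j$ marked edges have been processed, $H_j$ is a leafless connected subgraph of $G$ containing those $j$ edges with $\gamma(H_j)\le 1+j$. To process a marked edge $e=(u,v)$: if $e\in E(H_{j-1})$ do nothing; otherwise, if $u\notin V(H_{j-1})$ take a shortest path $P_u$ in $G$ from $u$ to $V(H_{j-1})$ and append it to the subgraph; then, if $v$ is still outside the current vertex set, take a shortest path $P_v$ in $G$ from $v$ to that vertex set and append it as well; finally add the edge $e$. These shortest paths exist because $G$ is connected, and by minimality their internal vertices lie outside the target vertex set.

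For the cyclomatic accounting, an appended shortest path of edge-length $k$ contributes $k$ new edges and exactly $k$ new vertices (one new endpoint plus $k-1$ new internal vertices), so it keeps the connected subgraph connected and preserves $|E|-|V|+1=\gamma$. Adding the closing edge $e$, whose two endpoints are now both in the subgraph, creates exactly one new independent cycle and increases $\gamma$ by $1$. For leaflessness, internal vertices of each appended path have degree exactly $2$; the old endpoint has its degree increased; the new endpoint $u$ (resp.\ $v$) momentarily has degree $1$ but acquires its second incident edge when $e$ is added, ending at degree $2$. Hence $H_j$ is leafless and connected, contains the first $j$ marked edges, and satisfies $\gamma(H_j)\le \gamma(H_{j-1})+1$. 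After all $i$ iterations, $H:=H_i$ is the desired subgraph, with $\gamma(H)\le 1+i=i+1$.

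The delicate point is that the augmenting paths must meet the current subgraph only at their endpoints; otherwise extra shortcut cycles would appear and push $\gamma$ up by more than $1$ per marked edge. Taking shortest paths to the current vertex set automatically guarantees this internal-disjointness and keeps the bookkeeping tight.
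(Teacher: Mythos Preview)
Your incremental construction has a real gap: nothing prevents the shortest path $P_u$ (or $P_v$) from being the marked edge $e$ itself. Suppose $v\in V(H_{j-1})$ and $u\notin V(H_{j-1})$; a shortest path from $u$ to $V(H_{j-1})$ may well be the single edge $e=(u,v)$. Then appending $P_u=e$ brings $u$ in with degree~$1$, the check on $v$ is vacuous, and ``finally add the edge $e$'' is a no-op---so $u$ remains a leaf in $H_j$. A concrete failure is the dumbbell graph (two disjoint triangles joined by a single bridge) with the bridge as the sole marked edge: starting $H_0$ at one triangle, your procedure outputs that triangle together with the pendant bridge, and the far endpoint is a leaf. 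Taking the connecting paths in $G-e$ rather than in $G$ is not by itself a fix, because when $e$ is a bridge separating $u$ from $H_{j-1}$ no such path exists; you would need an additional step (e.g.\ attaching, via a shortest path, a cycle lying on the far side of the bridge, which exists since $G$ is leafless) together with its own accounting.

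For comparison, the paper avoids this difficulty by a global rather than incremental construction: it fixes a spanning tree $E_0$ of $G$ and adds at most $i{+}1$ carefully chosen non-tree edges so that every marked bridge lies on a tree-path between two of the resulting fundamental cycles and every marked cycle edge lies on one of them; then it peels leaves. The bound $\gamma\le i{+}1$ is immediate from the count of non-tree edges, and leaf-peeling can never remove a marked edge because each marked edge already sits on a cycle or on a path between two cycles.
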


\begin{proof}
Let $G=(V,E)$ be a leaf\/less connected graph. If $\gamma(G) \le i+1$, there is nothing to prove. 
Thus assume $\gamma(G) \ge i+2$. Choose an arbitrary spanning
tree $(V,E_0)$ of $G$.

There are two types of edges in $G$:
\emph{bridge edges} and \emph{cycle edges}. A bridge edge is an edge whose
deletion disconnects the graph,
cycle edges are those whose deletion does not disconnect the graph.

Clearly, all bridge edges are in $E_0$.  
Let $E_{\text{mb}}\subseteq E_0$ denote the set of marked bridge edges. 
Removing the edges of $E_{\text{mb}}$ from $G$ will split $V$ into $|E_{\text{mb}}|+1$ connected components $V_1,\ldots,V_{|E_\text{mb}|+1}$;
removing the edges of $E_{\text{mb}}$ from the spanning tree $(V,E_0)$ will give exactly the same components.
For each \emph{cyclic} component $V_j$ we choose one edge $e_j\notin E_0$ that connects two nodes in $V_j$.
The set of these $|E_\text{mb}|+1$ edges is called $E_1$. Now each marked
bridge edge lies on a path connecting two cycles in $(V,E_0 \cup E_1)$.

Recall from graph theory~\cite{Diestel} the notion of a fundamental cycle:
Clearly, each edge $e\in E-E_0$ closes a unique cycle with $E_0$.
The cycles thus obtained are called the fundamental cycles of $G$ w.\,r.\,t. the spanning tree $(V,E_0)$.
Each cycle in $G$ can be obtained as an XOR-combination of fundamental cycles.
(This is just another formulation of the standard fact that the fundamental cycles form a basis
of the ``cycle space'' of $G$, see~\cite{Diestel}.)
From this it is immediate that every cycle edge of $G$ lies on some fundamental cycle. 
Now we associate an edge $e'\notin E_0$ with each marked cycle edge $e\in E_{\text{mc}}$.
Given $e$, let
$e'\notin E_0$ be such that $e$ is on the fundamental cycle of $e'$.
Let $E_2$ be the set of all edges $e'$ chosen in this way. Clearly, each $e \in
E_{\text{mc}}$ is a cycle edge in $(V,E_0\cup E_2)$.

Now let $G' =(V,E_0 \cup E_1 \cup E_2)$. Note that
$|E_1\cup E_2| \leq(|E_\text{mb}| + 1) +|E_\text{mc}| \le i+1$ and thus $\gamma(G') \leq i + 1$. 
In $G'$, each marked edge is on a cycle or on a path that connects two cycles. 
If we iteratively remove leaf edges from $G'$ until no leaf is left, none of the marked edges will be affected. 
In this way we obtain the desired leaf\/less subgraph $G^\ast$ with $\gamma(G^\ast)=\gamma(G')\le i+1$.
\qed
\end{proof}

\end{document}